\documentclass[a4paper,fleqn]{cas-sc}

\usepackage[numbers,sort&compress]{natbib}

\usepackage{mathtools}
\usepackage{bm}

\newtheorem{theorem}{Theorem}

\newtheorem{proposition}[theorem]{Proposition}

\newdefinition{remark}[theorem]{Remark}
\newproof{proof}{Proof}

\newcommand{\alttext}[1]{}

\newcommand{\figph}[2]{%
  \IfFileExists{#1}%
    {\includegraphics[width=\linewidth]{#1}}%
    {\setlength{\fboxsep}{0pt}%
     \fbox{\parbox[c][#2][c]{\dimexpr\linewidth-2\fboxrule\relax}{%
       \centering\color{gray}\ttfamily\scriptsize
       \detokenize{#1}\\[3pt]
       \textrm{\itshape figure placeholder}}}}%
}

\newcommand{\R}{\mathbb{R}}
\newcommand{\N}{\mathbb{N}}
\newcommand{\Om}{\Omega}
\newcommand{\Es}{E^{*}}
\newcommand{\xs}{x^{*}}
\newcommand{\ys}{y^{*}}
\newcommand{\ip}[2]{\langle #1,#2\rangle}
\newcommand{\dd}{\mathrm{d}}
\newcommand{\ii}{\mathrm{i}}

\begin{document}
\let\WriteBookmarks\relax
\def\floatpagepagefraction{1}
\def\textpagefraction{.001}

\shorttitle{Turing-Hopf bifurcation with additional food and predator competition}
\shortauthors{A. Hazra et al.}

\title[mode=title]{Predator self-limitation controls pattern formation in a predator–prey system with additional food: a Turing–Hopf analysis}

\author[1]{Anushree Hazra}%{anushree25300@gmail.com}
\credit{Formal analysis, Investigation, Software, Writing -- original draft}
\affiliation[1]{organization={Department of Mathematics, Ramakrishna Mission
            Vivekananda Educational and Research Institute},
            addressline={Belur Math},
            city={Howrah},
            postcode={711202},
            state={West Bengal},
            country={India}}

\author[2]{Aniket Banerjee}[orcid=0009-0007-6558-9708]
\cormark[1]
\ead{aniket.banerjee@sorbonne-universite.fr}
\credit{Conceptualization, Methodology, Software, Validation,
Writing -- original draft, Writing -- review \& editing}
\affiliation[2]{organization={Sorbonne Universit\'e, CNRS, Laboratoire
            Jacques-Louis Lions (LJLL)},
            addressline={4 place Jussieu},
            city={Paris},
            postcode={75005},
            country={France}}

\author[1]{Debaldev Jana}%{debaldevjana.jana@gmail.com}
\credit{Conceptualization, Supervision, Writing -- review \& editing}

\cortext[1]{Corresponding author}

%% ---------------------------------------------------------------
%%  Abstract: 245 words (CSF limit is 250).
%% ---------------------------------------------------------------
\begin{abstract}
Supplying a released predator with additional, non-reproducing food is a standard
lever in augmentative biological control, with a known drawback: with
nothing limiting the predator's own numbers, the extra food lets its population
grow without bound. Competition among the predators supplies the missing brake.
How this self-limitation reshapes the spatial arrangement of the two species has
not been asked. We address it with a reaction--diffusion model of a logistically
growing prey and a predator feeding through a Holling type-II response that also
draws on additional food of quantity $\xi$ and quality $1/\alpha$, the predators
competing among themselves at strength $c\xi$. In the well-mixed setting we
locate the Hopf bifurcation of the coexistence state exactly and show the cycle
born there is stable, so weak competition gives boom--bust oscillations, not runaway growth. Allowing movement, we obtain the diffusion-driven Turing
threshold at which the uniform state breaks into stationary patches of high and
low density, and find the uniform oscillation stable as it appears. With prey
mobility and competition strength as control parameters, the pattern-forming and
oscillatory instabilities meet at a single point, where we compute the dynamics.
Simulations confirm the sequence: weak competition gives a whole-field
oscillation, stronger competition with faster prey spread gives fixed patterns,
and near the crossover the two combine into patterns that pulse in time. Predator self-competition therefore sets the spatial structure of the
community, which is what matters when additional food is used to steer a control
agent in the field.
\end{abstract}

%% ---------------------------------------------------------------
%%  Highlights.  CSF also wants these as a SEPARATE editable file with
%%  "highlights" in the file name; see Highlights_CSF.tex.
%%  3-5 bullets, each at most 85 characters including spaces.
%% ---------------------------------------------------------------
%\begin{highlights}
%\item Predator self-limitation controls spatial, not just temporal, dynamics.
%\item The Hopf bifurcation of the kinetics is located exactly and is supercritical.
%\item Turing and Hopf curves cross transversally at an explicit codimension-2 point.
%\item The three-dimensional normal form gives a supercritical Hopf-pitchfork.
%\item Raising predator competition stabilises time dynamics but not the Turing mode.
%\end{highlights}

%% ---------------------------------------------------------------
%%  Keywords: 7, the CSF maximum.  None contains "and" or "of".
%% ---------------------------------------------------------------
\begin{keywords}
Predator-prey model \sep Additional food \sep Biological control \sep
Predator intraspecific competition \sep Turing instability \sep
Turing-Hopf bifurcation \sep Normal form
\end{keywords}

\maketitle
%% =====================================================================
%%  SHARED BODY -- identical in the Elsevier and Springer versions.
%% =====================================================================

\section{Introduction}\label{sec:intro}

Two species that interact chemically or ecologically and that move at different
speeds can abandon a uniform distribution and settle into a fixed arrangement of
crowded and empty patches, without any help from the environment.
Turing~\cite{turing1952chemical} showed why, and Segel and
Jackson~\cite{segel1972dissipative} pointed out that predator and prey do this
naturally: prey build up locally and act as the activator, predators spread
faster and act as the inhibitor. Field ecologists recognise the outcome.
Semi-arid vegetation forms regular bands and gaps~\cite{klausmeier1999regular};
mound fields, mussel beds and tussocks are spaced far more regularly than chance
would give~\cite{rietkerk2008regular,pringle2017spatial}; plankton is patchy at
scales that physical forcing does not explain~\cite{medvinsky2002spatiotemporal}.
Self-organisation of this kind is now read as a common state of ecological
systems rather than a curiosity~\cite{rietkerk2008regular,pringle2017spatial}.
Dispersal has also been tied to patchy invasion when the prey suffers an Allee
effect~\cite{petrovskii2002allee,morozov2006spatiotemporal}, to the spatial
dynamics of prey that defend themselves in
groups~\cite{venturino2013spatiotemporal}, and to the onset of spatiotemporal
chaos~\cite{petrovskii1999minimal,hu2015pattern}; cross-diffusion, harvesting and
seasonally varying dispersal widen the range of outcomes
further~\cite{sambath2018spatiotemporal,bhunia2023study,tao2021study}.

The kinetics used in this literature descend from Lotka and
Volterra~\cite{lotka1925elements,volterra1926variazioni}. Their perpetual cycles
are an artefact of two assumptions no field population satisfies: prey that grow
without limit, and predators that keep eating faster as prey get denser. Put the
prey on logistic growth and let each predator saturate, as Holling's type-II
response does~\cite{holling1959components,holling1965functional}, and the neutral
cycles give way to a Hopf bifurcation, that is, to a boom-and-bust cycle of a
definite amplitude set by the
parameters~\cite{freedman1980deterministic,may2001stability}. Once the two
species are also allowed to disperse, the question becomes how that temporal
cycle and the Turing mechanism act on each
other~\cite{yi2009bifurcation,li2013hopf,murray2003mathematical,%
okubo2001diffusion,miao2022hopf}. A separate line of work makes the intake rate
depend on predator density as well, either because foragers get in each other's
way~\cite{beddington1975mutual,deangelis1975model} or because they move up prey
gradients~\cite{luo2021global}.

A model meant for augmentative biological control needs one more ingredient.
Released natural enemies are not left to live on the pest alone: they are given
pollen, factitious prey~\cite{toab218} or artificial
diet~\cite{verma2026t,two_aniket} so that they survive the gaps between
outbreaks. Srinivasu, Prasad and Venkatesulu~\cite{srinivasu2007biological} wrote
this into the equations by letting the predator feed on a supplement that is
spread evenly over the habitat and never runs out, of quantity $\xi$ and quality
$1/\alpha$. Eating the supplement uses up handling time, so $\xi$ enters the
denominator of the type-II response and weakens predation on the pest; the
supplement also feeds the predator, so it enters the numerator of the growth
term. The model has since been worked over thoroughly: when to deliver the
food~\cite{srinivasu2010time}, how much to deliver~\cite{srinivasu2011role}, what
happens when the predators interfere with one another~\cite{prasad2013dynamics},
what constant harvesting does~\cite{sen2015global}, what an Allee effect in the
prey does~\cite{gurubilli2017global}, and what changes under a type-III
response~\cite{srinivasu2018additional}. The recurring conclusion is that
high-quality food can push the pest to extinction, while poor-quality food tends
to shelter it.

There is a flaw in this framework that was noticed only recently. Parshad,
Wickramasooriya, Antwi-Fordjour and Banerjee~\cite{parshad2023additional} showed
that once the supplement is plentiful enough the predator density reaches
infinity in finite time. The reason is easy to state: the supplement is never
depleted, so each predator keeps growing at a rate that does not fall off as the
population rises, and nothing in the model stops it. The fix they propose is
biological rather than cosmetic. Predators compete among themselves for shelter,
oviposition sites and mates, and the competition is fiercest exactly where the
supplement has drawn them together, so a loss term proportional to $\xi y^{2}$
belongs in the equation. It restores boundedness, and it also changes which
bifurcations the model has rather than simply capping the dynamics. The
coefficient $c$ then becomes a quantity with a field meaning: how much the
released enemies get in each other's way.

What nobody has asked is what this term does once the populations occupy space,
and the question is not merely technical. Supplemented systems are known to
behave differently when dispersal is included; Ghorai and
Poria~\cite{ghorai2016pattern}, for instance, found that additional food can
either create or remove spatiotemporal chaos depending on its quality. Predator
self-limitation damps the density of exactly the species that plays the inhibitor
in the Turing mechanism, so it should act on pattern formation directly. Whether
it suppresses patterns, encourages them, or shifts the balance between fixed
patterns and oscillations is not something linear stability analysis can settle
on its own.

Settling it requires the codimension-two theory. Where a Turing curve and a Hopf
curve cross in a plane of two parameters, a zero eigenvalue at a finite
wavenumber sits alongside a pair of purely imaginary eigenvalues at zero
wavenumber, and a three-dimensional normal form decides what is actually
observed: a uniform oscillation, a stationary pattern, or a pattern that
oscillates. The recipe for computing its coefficients in reaction--diffusion
systems is due mainly to Song, Zhang and Peng~\cite{song2016turing} and to Jiang,
An and Shi~\cite{jiang2020formulation}. It has been applied to ratio-dependent
Holling--Tanner kinetics~\cite{an2018turing}, the Crowley--Martin
response~\cite{cao2018turing}, the Lengyel--Epstein system~\cite{chen2019turing},
Schnakenberg kinetics with a delay in gene expression~\cite{jiang2019turing}, the
Sel'kov--Schnakenberg family~\cite{li2023turing}, chemotaxis with a fear
effect~\cite{dai2021turing,lv2024turing}, schooling prey with Smith
growth~\cite{fu2025turing}, harvesting together with an Allee
effect~\cite{chen2023turing}, hunting cooperation and anti-predator
behavior~\cite{li2025spatiotemporal}, and a modified Leslie--Gower system with a
nonmonotonic response~\cite{li2025turing}; the attractors that such points
organise have been catalogued in
detail~\cite{an2019spatiotemporal,song2016turing}. We are not aware of any such
analysis for a model that carries additional food and predator self-limitation at
the same time, even though that is the combination which makes the
additional-food framework well posed in the first place.

This paper fills that gap. We take the kinetics of Parshad et
al.~\cite{parshad2023additional}, place them in a bounded one-dimensional habitat
with no flux across the boundaries, and use the prey diffusivity $d_{1}$ and the
competition strength $c$ as the two parameters that are varied. We (i) locate the
Hopf bifurcation of the kinetics in closed form and show, using Perko's Lyapunov
number, that it is supercritical across the whole admissible parameter window, so
the cycle that appears is a stable one; (ii) give necessary and sufficient
conditions for diffusion-driven instability and derive the normal form of the
steady-state bifurcation it produces, fixing the convention that decides whether
the branch is super- or subcritical; (iii) prove that the spatially uniform Hopf
bifurcation is supercritical, by computing the first Lyapunov coefficient in
Hassard's normalisation and checking it against Kuznetsov's invariant formula;
(iv) locate the Turing--Hopf point, verify that both curves cross transversally,
and compute the three-dimensional normal form on the center manifold; and (v)
test every prediction against direct numerical solution of the partial
differential equations. Sections~\ref{sec:model}--\ref{sec:hopf-ode} set up the
model and treat its kinetics, Sections~\ref{sec:rd}--\ref{sec:th} treat the
spatial problem, and Sections~\ref{sec:discussion}--\ref{sec:conclusion} read the
results back into the biology.

%% =====================================================================
\section{The model formulation}\label{sec:model}
%% =====================================================================

%% NOTE TO AUTHORS: in the Chaos source the Lotka--Volterra system, the logistic
%% law and the unmodified Holling type-II response p(x)=x/(1+x) were commented
%% out, so this section now opens directly on the additional-food modification.
%% A referee may find the jump abrupt, since x, y and the type-II form are used
%% below before being introduced. Consider restoring two or three sentences here.

In accordance with Srinivasu \emph{et al.}~\cite{srinivasu2007biological}, the
concept of additional food was introduced to improve the efficiency of
biological control by sustaining predators through periods of prey scarcity.
Supplementary food reduces starvation and enhances predator reproduction and
persistence. Srinivasu \emph{et al.}\ modified the Holling type-II response by
introducing the quantity and the quality of the additional food, so that the
predation term becomes
\begin{equation}
\frac{xy}{1+\alpha\xi+x},
\end{equation}
where $\xi$ is the quantity of additional food supplied to the predators and
$1/\alpha$ represents its quality. The presence of $\alpha\xi$ in the
denominator reflects the fact that predators divide their handling effort
between prey and supplement, so that the effective predation pressure on the
prey depends on both prey density and food supplementation.

Because predators draw energy from both sources, their reproduction term is
modified accordingly,
\begin{equation}
\frac{\beta(x+\xi)y}{1+\alpha\xi+x}
=\frac{\beta xy}{1+\alpha\xi+x}+\frac{\beta\xi y}{1+\alpha\xi+x},
\end{equation}
$\beta$ denoting the conversion efficiency. Predator biomass therefore increases
through consumption of prey as well as through utilization of the supplement,
allowing the predator to persist when the prey is scarce.

The formulation above still assumes that predator mortality is proportional to
predator density alone. In practice predators experience intense intraspecific
competition for space, shelter, mating opportunities and resources; as predator
density rises this competition reduces the per-capita growth rate. Parshad
\emph{et al.}~\cite{parshad2023additional} showed that this is not a cosmetic
refinement: without a density-dependent brake the additional-food model admits
solutions that blow up in finite time, because a non-depletable supplement keeps
the predator's per-capita growth rate bounded away from zero however large the
population becomes. Since the crowding is itself driven by the aggregating effect
of the supplement, the self-limitation term is taken proportional to the amount
of additional food,
\begin{equation}
c\,\xi y^{2},
\end{equation}
where $c>0$ measures the strength of predator intraspecific competition. Its
inclusion prevents unbounded predator growth and yields biologically meaningful
dynamics.

Collecting these ingredients, the predator--prey system with logistic prey
growth, a Holling type-II response modified by additional food, predator growth
supported by both prey and supplement, and density-dependent predator
competition, is
\begin{equation}\label{eq:ode}
\left\{
\begin{aligned}
\frac{\dd x}{\dd t}&=x\Bigl(1-\frac{x}{\gamma}\Bigr)-\frac{xy}{1+\alpha\xi+x},\\[2pt]
\frac{\dd y}{\dd t}&=\frac{\beta(x+\xi)y}{1+\alpha\xi+x}-\delta y-c\xi y^{2},\\[2pt]
x(0)&>0,\qquad y(0)>0 .
\end{aligned}\right.
\end{equation}

\begin{remark}[Scaling]\label{rem:scaling}
System \eqref{eq:ode} is written in dimensionless variables. Starting from the
dimensional model, time is rescaled by the prey intrinsic growth rate and the
prey density by the half-saturation constant of the functional response; this
normalizes the intrinsic growth rate to unity, which is why the parameter for
intrinsic growth rate $r$ no longer appears in \eqref{eq:ode}. The remaining
parameters $\gamma,\alpha,\beta,\delta,\xi,c$ are the corresponding
dimensionless groups, and all of them are positive.
\end{remark}

Here $x(t)$ and $y(t)$ are the prey and predator densities, $\gamma$ is the prey
carrying capacity, $\beta$ the conversion efficiency, $\delta$ the natural
predator mortality, $\xi$ the quantity of additional food, $1/\alpha$ its quality
and $c$ the coefficient of predator intraspecific competition. The model extends
the classical framework by incorporating additional food and predator
self-regulation simultaneously, and thereby permits a detailed study of how food
supplementation and predator competition affect persistence, stability and
bifurcation behavior.

It is convenient throughout to write \eqref{eq:ode} in the factored form
\begin{equation}\label{eq:factored}
\frac{\dd x}{\dd t}=p(x)\bigl(F(x)-y\bigr),\qquad
\frac{\dd y}{\dd t}=y\bigl(h(x)-c\xi y\bigr),
\end{equation}
with
\begin{equation}\label{eq:pFh}
p(x)=\frac{x}{1+\alpha\xi+x},\quad
F(x)=\Bigl(1-\frac{x}{\gamma}\Bigr)(1+\alpha\xi+x),\quad
h(x)=\frac{\beta(x+\xi)}{1+\alpha\xi+x}-\delta .
\end{equation}
The function $F$ is the prey nullcline written as a graph over $x$; it is a
downward parabola with $F(0)=1+\alpha\xi$, $F(\gamma)=0$ and a maximum at
$x=(\gamma-1-\alpha\xi)/2$. The function $h$ is the predator's net per-capita
growth rate at zero predator density; a short computation gives
\begin{equation}
h'(x)=\frac{\beta\bigl[1+\xi(\alpha-1)\bigr]}{(1+\alpha\xi+x)^{2}},
\end{equation}
so that $h$ is strictly increasing whenever $1+\xi(\alpha-1)>0$, which holds in
particular for all $\alpha\ge 1$. This monotonicity is used repeatedly below.

%% =====================================================================
\section{Equilibria and their stability}\label{sec:ode}
%% =====================================================================

System \eqref{eq:ode} possesses the trivial equilibrium $E_{0}=(0,0)$, the
prey-free equilibrium $E_{1}=(0,\bar y)$, the predator-free equilibrium
$E_{2}=(\gamma,0)$ and the coexistence equilibrium $\Es=(\xs,\ys)$. A brief
description of these states is given in Ref.~\cite{parshad2023additional}. For
completeness, and because the location of $\Es$ is needed at every subsequent
step, we record them explicitly.

\begin{proposition}\label{prop:equilibria}
Let all parameters be positive and let $p,F,h$ be as in \eqref{eq:pFh}.
\begin{enumerate}
\item $E_{0}=(0,0)$ always exists; its eigenvalues are $1$ and $h(0)$, so
$E_{0}$ is always unstable.
\item The prey-free equilibrium $E_{1}=(0,\bar y)$ with
$\bar y=h(0)/(c\xi)$ exists in the open positive quadrant if and only if
$h(0)=\beta\xi/(1+\alpha\xi)-\delta>0$, i.e.\ if and only if the supplement
alone is able to sustain the predator.
\item $E_{2}=(\gamma,0)$ always exists; it is a saddle if $h(\gamma)>0$ and
locally asymptotically stable if $h(\gamma)<0$.
\item A coexistence equilibrium $\Es=(\xs,\ys)$ is a solution of
\begin{equation}\label{eq:interior}
h(\xs)=c\xi F(\xs),\qquad \ys=F(\xs),\qquad \xs\in(0,\gamma).
\end{equation}
If $h(0)<0<h(\gamma)$ then \eqref{eq:interior} has at least one root in
$(0,\gamma)$, because $G(x):=h(x)-c\xi F(x)$ satisfies
$G(0)=h(0)-c\xi(1+\alpha\xi)<0$ and $G(\gamma)=h(\gamma)>0$.
\end{enumerate}
\end{proposition}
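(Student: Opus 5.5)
The proof works directly with the factored form \eqref{eq:factored} and its Jacobian. At a general point,
\begin{equation*}
J(x,y)=\begin{pmatrix} p'(x)\bigl(F(x)-y\bigr)+p(x)F'(x) & -p(x)\\ y\,h'(x) & h(x)-2c\xi y\end{pmatrix}.
\end{equation*}
The four items then reduce to solving $\dot x=\dot y=0$ case by case and reading off eigenvalues.

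\textbf{Item 1.} At $E_0$ we use $p(0)=0$ and $F(0)=1+\alpha\xi$. This gives $p'(0)F(0)=\frac{1}{1+\alpha\xi}(1+\alpha\xi)=1$, so $J(E_0)=\mathrm{diag}\bigl(1,h(0)\bigr)$. The eigenvalue $1>0$ makes $E_0$ unstable regardless of the sign of $h(0)$.

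\textbf{Item 2.} Setting $x=0$ makes $\dot x=0$ automatically. The $y$-equation then requires $y=0$ or $h(0)=c\xi y$, so a positive $\bar y$ exists exactly when $h(0)>0$. Evaluating $h(0)=\beta\xi/(1+\alpha\xi)-\delta$ gives the stated form.

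\textbf{Item 3.} At $E_2$ we have $y=0$, $F(\gamma)=0$ and $p(\gamma)>0$. The Jacobian is upper triangular with diagonal entries $p(\gamma)F'(\gamma)$ and $h(\gamma)$. The first entry is negative because
\[
F'(\gamma)=-\tfrac{1}{\gamma}(1+\alpha\xi+\gamma)<0 .
\]
So the sign of $h(\gamma)$ decides between a saddle and a sink. The borderline case $h(\gamma)=0$ is not covered by the statement.

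\textbf{Item 4.} Interior equilibria have $x,y>0$. Since $p(x)>0$, the equations force $y=F(x)$ and $h(x)=c\xi y$, which is \eqref{eq:interior}. Positivity of $y$ requires $F(x)>0$, and this forces $x\in(0,\gamma)$ because $F$ is positive on $[0,\gamma)$ and negative beyond $\gamma$. For existence we evaluate $G$ at the endpoints: $G(0)=h(0)-c\xi(1+\alpha\xi)$, which is negative since $h(0)<0$, and $G(\gamma)=h(\gamma)-0>0$. $G$ is continuous on $[0,\gamma]$, so the intermediate value theorem gives a root $\xs\in(0,\gamma)$. Then $\ys=F(\xs)>0$, so $\Es$ lies in the open quadrant.

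No step is a real obstacle. The points needing care are the identity $p'(0)F(0)=1$ in item 1 and the sign of $F'(\gamma)$ in item 3. Uniqueness of $\Es$ is not claimed, which is fine since $F$ is concave and $h$ may not be monotone.
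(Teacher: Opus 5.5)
Your proposal is correct and takes the same approach as the paper. The paper writes out only the intermediate-value step for $G$ on $[0,\gamma]$, and the other items follow from the Jacobian of \eqref{eq:factored} evaluated at each boundary equilibrium, exactly as you do; in fact $p(\gamma)F'(\gamma)=-1$ exactly. Your closing aside on uniqueness is inaccurate, though it does not affect the proof: $h'(x)=\beta[1+\xi(\alpha-1)]/(1+\alpha\xi+x)^{2}$ has constant sign, so $h$ is always monotone and strictly increasing when $h(0)<0<h(\gamma)$; what actually blocks uniqueness is that $F$ increases to the left of its vertex $(\gamma-1-\alpha\xi)/2$, so $G'=h'-c\xi F'$ need not keep one sign, which is why the paper checks uniqueness only numerically.
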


For the parameter set used throughout the numerical part of this paper,
\begin{equation}\label{eq:params}
\gamma=11,\quad \alpha=1,\quad \delta=1,\quad \beta=3.9,\quad \xi=0.3,
\end{equation}
one has $h(0)=-0.1<0$ and $h(\gamma)=2.5829268>0$, so that $E_{1}$ does not
exist, $E_{2}$ is a saddle, and a coexistence state is guaranteed. Numerical
continuation of $G$ on $(0,\gamma)$ shows that for every value of $c$ considered
below the root of \eqref{eq:interior} is unique.

%% =====================================================================
\section{Hopf bifurcation of the kinetics}\label{sec:hopf-ode}
%% =====================================================================

In the notation of \eqref{eq:factored}--\eqref{eq:pFh} the Jacobian of
\eqref{eq:ode} is
\begin{equation}
J=\begin{pmatrix}
p'(x)\bigl(F(x)-y\bigr)+F'(x)p(x) & -p(x)\\[2pt]
h'(x)\,y & h(x)-2c\xi y
\end{pmatrix},
\end{equation}
and at the coexistence equilibrium, where $\ys=F(\xs)$ and
$h(\xs)=c\xi F(\xs)$, it reduces to
\begin{equation}\label{eq:Jstar}
J^{*}=\begin{pmatrix}
F'(\xs)p(\xs) & -p(\xs)\\[2pt]
h'(\xs)F(\xs) & -c\xi F(\xs)
\end{pmatrix}.
\end{equation}
Hence
\begin{align}
\operatorname{tr}(J^{*})&=F'(\xs)p(\xs)-c\xi F(\xs),\label{eq:tr}\\
\det(J^{*})&=p(\xs)F(\xs)\bigl(h'(\xs)-c\xi F'(\xs)\bigr).\label{eq:det}
\end{align}
For \eqref{eq:factored} to undergo a Hopf bifurcation at $\Es$ the Jacobian
$J^{*}$ must possess a pair of purely imaginary eigenvalues
$\pm \ii\,\omega_{0}$, that is
\begin{equation}\label{eq:hopfcond}
\operatorname{tr}(J^{*})=0,\qquad \det(J^{*})=\omega_{0}^{2}>0 .
\end{equation}

\begin{remark}\label{rem:hopfsingle}
Combining the two conditions in \eqref{eq:hopfcond} with the equilibrium
relation $h(\xs)=c\xi F(\xs)$ eliminates $c$ and leaves the single scalar
equation
\begin{equation}\label{eq:hopfscalar}
h(\xs)=F'(\xs)\,p(\xs),
\end{equation}
which determines $\xs$ independently of $c$; the critical competition strength
then follows from $c_{0}=h(\xs)/\bigl(\xi F(\xs)\bigr)$. For the parameter set
\eqref{eq:params}, equation \eqref{eq:hopfscalar} has the unique root
\begin{equation}\label{eq:Estar}
\xs=0.0640448876,\qquad \ys=F(\xs)=1.3561030602,
\end{equation}
when
\begin{equation}\label{eq:c0}
c_{0}=0.1004263750,\qquad
\omega_{0}=\sqrt{\det J^{*}}=0.3630319203 ,
\end{equation}
with $\det J^{*}=0.1317921752$.
\end{remark}

\subsection{Direction of the Hopf bifurcation and stability of the
bifurcating periodic solution}\label{sec:hopfdir}

Under the Hopf condition $\operatorname{tr}(J^{*})=0$, $\det(J^{*})>0$, one has
\begin{equation}\label{eq:trzero}
F'(\xs)p(\xs)=c\xi F(\xs).
\end{equation}
Shifting coordinates by $X_{1}=x-\xs$, $Y_{1}=y-F(\xs)$ moves $\Es$ to the
origin. Expanding \eqref{eq:factored} in a Taylor series about $\Es$ gives
\begin{equation}\label{eq:taylor}
\left\{
\begin{aligned}
\dot X_{1}&=p(\xs)F'(\xs)X_{1}-p(\xs)Y_{1}
+\Bigl(p'(\xs)F'(\xs)+\tfrac{1}{2}p(\xs)F''(\xs)\Bigr)X_{1}^{2}
-p'(\xs)X_{1}Y_{1}\\
&\quad
+\Bigl(\tfrac{1}{6}F'''(\xs)p(\xs)+\tfrac{1}{2}F''(\xs)p'(\xs)
+\tfrac{1}{2}F'(\xs)p''(\xs)\Bigr)X_{1}^{3}
-\tfrac{1}{2}p''(\xs)X_{1}^{2}Y_{1}+O(|X_{1},Y_{1}|^{4}),\\[4pt]
\dot Y_{1}&=F(\xs)h'(\xs)X_{1}-c\xi F(\xs)Y_{1}
+\tfrac{1}{2}F(\xs)h''(\xs)X_{1}^{2}+h'(\xs)X_{1}Y_{1}-c\xi Y_{1}^{2}\\
&\quad+\tfrac{1}{6}F(\xs)h'''(\xs)X_{1}^{3}
+\tfrac{1}{2}h''(\xs)X_{1}^{2}Y_{1}+O(|X_{1},Y_{1}|^{4}).
\end{aligned}\right.
\end{equation}
Using \eqref{eq:trzero}, system \eqref{eq:taylor} takes the form
\begin{equation}\label{eq:abform}
\left\{
\begin{aligned}
\dot X_{1}&=a_{10}X_{1}+a_{01}Y_{1}+a_{20}X_{1}^{2}+a_{11}X_{1}Y_{1}
+a_{30}X_{1}^{3}+a_{21}X_{1}^{2}Y_{1}+O(4),\\
\dot Y_{1}&=b_{10}X_{1}-a_{10}Y_{1}+b_{20}X_{1}^{2}+b_{02}Y_{1}^{2}
+b_{11}X_{1}Y_{1}+b_{30}X_{1}^{3}+b_{21}X_{1}^{2}Y_{1}+O(4),
\end{aligned}\right.
\end{equation}
with
\begin{align}
&a_{10}=p(\xs)F'(\xs),\quad a_{01}=-p(\xs),\notag\\
&a_{20}=p'(\xs)F'(\xs)+\tfrac12 p(\xs)F''(\xs),\quad a_{11}=-p'(\xs),\notag\\
&a_{30}=\tfrac16 F'''(\xs)p(\xs)+\tfrac12 F''(\xs)p'(\xs)+\tfrac12 F'(\xs)p''(\xs),\notag\\
&a_{21}=-\tfrac12 p''(\xs),\quad b_{10}=F(\xs)h'(\xs),\quad b_{20}=\tfrac12 F(\xs)h''(\xs),\notag\\
&b_{02}=-c\xi,\quad b_{11}=h'(\xs),\quad
b_{30}=\tfrac16 F(\xs)h'''(\xs),\quad b_{21}=\tfrac12 h''(\xs).\label{eq:abcoef}
\end{align}
To bring \eqref{eq:abform} into canonical form we apply
\begin{equation}
X_{1}=X_{2},\qquad
Y_{1}=-\frac{a_{10}}{a_{01}}X_{2}-\frac{\omega_{0}}{a_{01}}Y_{2},
\end{equation}
which yields
\begin{equation}\label{eq:canon}
\left\{
\begin{aligned}
\dot X_{2}&=-\omega_{0}Y_{2}+A_{20}X_{2}^{2}+A_{11}X_{2}Y_{2}
+A_{30}X_{2}^{3}+A_{21}X_{2}^{2}Y_{2}+O(4),\\
\dot Y_{2}&=\omega_{0}X_{2}+B_{20}X_{2}^{2}+B_{11}X_{2}Y_{2}+B_{02}Y_{2}^{2}
+B_{30}X_{2}^{3}+B_{21}X_{2}^{2}Y_{2}+O(4),
\end{aligned}\right.
\end{equation}
where
\begin{align}
&A_{20}=a_{20}-\frac{a_{11}a_{10}}{a_{01}},\quad
A_{11}=-\frac{a_{11}\omega_{0}}{a_{01}},\quad
A_{30}=a_{30}-\frac{a_{21}a_{10}}{a_{01}},\quad
A_{21}=-\frac{a_{21}\omega_{0}}{a_{01}},\notag\\
&B_{20}=-\frac{1}{\omega_{0}}\Bigl(b_{20}a_{01}-b_{11}a_{10}+a_{20}a_{10}
+\frac{b_{02}a_{10}^{2}}{a_{01}}-\frac{a_{11}a_{10}^{2}}{a_{01}}\Bigr),\notag\\
&B_{11}=\frac{a_{10}a_{11}}{a_{01}}+b_{11}-\frac{2b_{02}a_{10}}{a_{01}},\quad
B_{02}=-\frac{b_{02}\omega_{0}}{a_{01}},\notag\\
&B_{30}=\frac{a_{10}}{\omega_{0}}\Bigl(\frac{a_{21}a_{10}}{a_{01}}-a_{30}
-\frac{b_{30}a_{01}}{a_{10}}+b_{21}\Bigr),\quad
B_{21}=\frac{a_{21}a_{10}}{a_{01}}+b_{21}.
\end{align}
Following Perko~\cite[p.~353]{perko2013differential} (see also
Refs.~\cite{guckenheimer1983nonlinear,kuznetsov2004elements} for equivalent
formulations), the Lyapunov number of \eqref{eq:canon} is
\begin{equation}\label{eq:liapunov}
\sigma=\frac{3\pi}{2\,\omega_{0}^{2}}
\Bigl[3\omega_{0}A_{30}+\omega_{0}B_{21}
+A_{11}A_{20}-B_{11}(B_{20}+B_{02})-2A_{20}B_{20}\Bigr].
\end{equation}
The direction of the Hopf bifurcation is determined by the sign of $\sigma$:
$\sigma<0$ gives a supercritical bifurcation with stable small-amplitude
periodic solutions, while $\sigma>0$ gives a subcritical bifurcation with
unstable cycles.

\subsection{Numerical verification}\label{sec:hopfnum}

For the parameter set \eqref{eq:params} the coefficients \eqref{eq:abcoef}
evaluated at the Hopf point \eqref{eq:Estar}--\eqref{eq:c0} are
\begin{equation}
\begin{aligned}
&a_{10}=0.0408565543,\quad a_{01}=-0.0469521848,\quad a_{20}=0.6037153524,\\
&a_{11}=-0.6986924139,\quad a_{30}=-0.5092386985,\quad a_{21}=0.5122209835,\\
&b_{10}=2.8424967620,\quad b_{20}=-2.0838733299,\quad b_{02}=-0.0301279125,\\
&b_{11}=2.0960772418,\quad b_{30}=1.5277160956,\quad b_{21}=-1.5366629506,
\end{aligned}
\end{equation}
which give
\begin{equation}
\begin{aligned}
&A_{20}=-0.0042683804,\ A_{11}=-5.4022544415,\ A_{30}=-0.0635174922,\\
&A_{21}=3.9604667637,\ B_{20}=-0.0360867362,\ B_{11}=2.6516279430,\\
&B_{02}=-0.2329474971,\ B_{30}=0.0317932400,\ B_{21}=-1.9823841570,
\end{aligned}
\end{equation}
and therefore
\begin{equation}\label{eq:sigmaval}
\sigma=-1.8849193<0 .
\end{equation}
The bifurcation is thus supercritical. It is worth noting that $c$ is not a free
parameter here: by Remark~\ref{rem:hopfsingle} the Hopf condition fixes $\xs$ and
hence $c_{0}$, so $\sigma$ is defined at one point of the $c$-axis only. To test
robustness we therefore followed the Hopf curve in the $(\beta,\xi)$-plane,
recomputing $\xs$, $c_{0}$ and $\sigma$ at each point. For
$\beta\in\{3.5,3.9,4.3\}$ and $\xi\in\{0.1,0.2,0.3\}$, the range over which a
biologically admissible Hopf point exists for $\gamma=11$, $\alpha=\delta=1$, the
critical competition strength varies over almost three orders of magnitude, from
$c_{0}=1.593$ down to $c_{0}=0.0071$, while $\sigma$ stays between $-1.11$ and
$-6.47$. Supercriticality is therefore not an artefact of the particular
parameter set \eqref{eq:params}. Three further independent checks confirm the
conclusion at \eqref{eq:params}. First, the first Lyapunov coefficient computed
in Sec.~\ref{sec:hopf-pde} from Hassard's normalization gives
$\ell_{1}=-0.0726049860<0$, and one verifies the identity
$\sigma=3\pi\ell_{1}/\omega_{0}$ to ten digits. Second, Kuznetsov's invariant
formula~\cite{kuznetsov2004elements}
\begin{equation}
\ell_{1}^{\mathrm{inv}}=\frac{1}{2\omega_{0}}\,
\mathrm{Re}\Bigl[\ip{q^{*}}{C(q,q,\bar q)}
-2\ip{q^{*}}{B\bigl(q,A^{-1}B(q,\bar q)\bigr)}
+\ip{q^{*}}{B\bigl(\bar q,(2\ii\omega_{0}I-A)^{-1}B(q,q)\bigr)}\Bigr]
\end{equation}
returns $\ell_{1}^{\mathrm{inv}}=-0.1999961<0$. Third, direct integration of
\eqref{eq:ode} shows that at $c=c_{0}$ small perturbations of $\Es$ decay, that
for $c>c_{0}$ perturbations as large as $80\%$ of $\xs$ return to $\Es$ (so no
unstable cycle coexists with the stable equilibrium), and that stable limit
cycles are found only for $c<c_{0}$. All three tests are consistent with a
supercritical Hopf bifurcation; a representative cycle is shown in
Fig.~\ref{fig:1}.

\begin{figure}[ht!] 
\includegraphics[width=0.8\columnwidth]{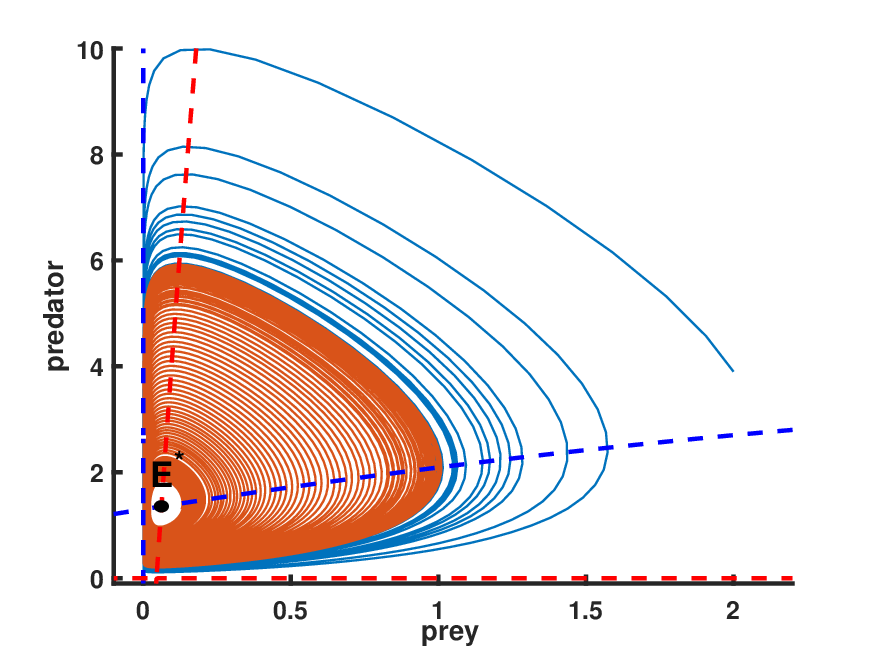} \caption{Phase-plane representation for $\gamma=11$, $\alpha=1$, $\delta=1$, $\beta=3.9$, $\xi=0.3$, and $c=0.088$. The coexistence equilibrium $\Es$ is unstable and is surrounded by a stable limit cycle.} \label{fig:1}
\alttext{Phase-plane plot of predator density against prey density. A trajectory starting near the coexistence equilibrium spirals outward and settles onto a closed orbit, so the equilibrium repels nearby states while the surrounding periodic orbit attracts them.}
\end{figure}

%% =====================================================================
\section{The reaction--diffusion system and Turing instability}\label{sec:rd}
%% =====================================================================

We now embed the kinetics \eqref{eq:factored} in a bounded one-dimensional
habitat and consider
\begin{equation}\label{eq:pde}
\left\{
\begin{aligned}
\frac{\partial x}{\partial t}&=d_{1}\Delta x+p(x)\bigl(F(x)-y\bigr),
&& u\in\Om,\ t>0,\\
\frac{\partial y}{\partial t}&=d_{2}\Delta y+y\bigl(h(x)-c\xi y\bigr),
&& u\in\Om,\ t>0,\\
\frac{\partial x}{\partial n}&=\frac{\partial y}{\partial n}=0,
&& u\in\partial\Om,\ t>0,\\
x(u,0)&=x_{0}(u)\ge 0,\quad y(u,0)=y_{0}(u)\ge 0, && u\in\Om,
\end{aligned}\right.
\end{equation}
where $x(u,t)$ and $y(u,t)$ are the prey and predator densities at the spatial
position $u$, $d_{1},d_{2}>0$ are the diffusion coefficients and $n$ is the
outward unit normal on $\partial\Om$. Throughout, $\Om=(0,l\pi)\subset\R$ with
$l>0$ and $\Delta=\partial^{2}/\partial u^{2}$; the spatial variable is $u$, and
the symbols $x$ and $y$ are reserved for the population densities.

We work in the real Sobolev space
\begin{equation}
\mathcal{X}=\Bigl\{(x,y)\in H^{2}(\Om)\times H^{2}(\Om):
\frac{\partial x}{\partial n}=\frac{\partial y}{\partial n}=0,\
u\in\partial\Om\Bigr\},
\end{equation}
and denote by $\mathcal{X}_{\mathbb{C}}$ its complexification,
\begin{equation}
\mathcal{X}_{\mathbb{C}}=\mathcal{X}\oplus \ii\,\mathcal{X}
=\{x_{1}+\ii x_{2}:x_{1},x_{2}\in\mathcal{X}\}.
\end{equation}

\begin{remark}[Well-posedness]\label{rem:wellposed}
The reaction terms in \eqref{eq:pde} are smooth on the closed positive quadrant
and both coordinate axes are invariant, so the non-negative quadrant is
positively invariant for the kinetics; since the diffusion matrix is diagonal
with positive entries, the same holds for \eqref{eq:pde} by the maximum
principle~\cite{murray2003mathematical,okubo2001diffusion}. Moreover
$\dd x/\dd t\le x(1-x/\gamma)$ gives
$\limsup_{t\to\infty}\max_{\bar\Om}x\le\gamma$, and on the invariant region
$\{x\le\gamma\}$ one has $\dd y/\dd t\le y\bigl(h(\gamma)-c\xi y\bigr)$, so that
$\limsup_{t\to\infty}\max_{\bar\Om}y\le h(\gamma)/(c\xi)$. Solutions of
\eqref{eq:pde} with non-negative initial data therefore exist globally and are
uniformly bounded. This is precisely the role played by the self-limitation
term: without it the second comparison fails and solutions may blow up in finite
time~\cite{parshad2023additional}.
\end{remark}

Translating $x=\bar x+\xs$, $y=\bar y+\ys$, dropping the bars and linearising
\eqref{eq:pde} at $\Es$ gives
\begin{equation}\label{eq:lin}
\dot U(t)=\begin{pmatrix} d_{1}&0\\ 0&d_{2}\end{pmatrix}\Delta U(t)
+\begin{pmatrix} a_{1}&a_{2}\\ b_{1}&cb_{2}\end{pmatrix}U(t),
\end{equation}
where $U=(x,y)^{T}$ and
\begin{equation}\label{eq:abdef}
a_{1}=F'(\xs)p(\xs),\quad a_{2}=-p(\xs),\quad
b_{1}=h'(\xs)\ys,\quad b_{2}=-\xi \ys .
\end{equation}
The operator $\varphi\mapsto-\Delta\varphi$ on $(0,l\pi)$ with
$\varphi'(0)=\varphi'(l\pi)=0$ has eigenvalues $z_{k}=k^{2}/l^{2}$ with
eigenfunctions $\varphi_{k}(u)=\cos(ku/l)$, $k\in\N_{0}=\N\cup\{0\}$.
Consequently the characteristic equation of \eqref{eq:lin} restricted to the
$k$-th mode is
\begin{equation}\label{eq:char}
\lambda^{2}-T_{k}\lambda+D_{k}=0,
\end{equation}
with
\begin{align}
T_{k}&=-(d_{1}+d_{2})z_{k}+(a_{1}+cb_{2}),\label{eq:Tk}\\
D_{k}&=d_{1}d_{2}z_{k}^{2}-(a_{1}d_{2}+cb_{2}d_{1})z_{k}+(ca_{1}b_{2}-a_{2}b_{1}),
\label{eq:Dk}
\end{align}
and roots $\lambda_{1,2}=\bigl(T_{k}\pm\sqrt{T_{k}^{2}-4D_{k}}\bigr)/2$.
Introduce the conditions
\begin{equation}
(R_{1}):\ a_{1}+cb_{2}<0,\qquad
(R_{2}):\ ca_{1}b_{2}-a_{2}b_{1}>0 .
\end{equation}
If $(R_{1})$ and $(R_{2})$ hold then $T_{0}<0$ and $D_{0}>0$, whence:

\begin{theorem}\label{thm:odestable}
If $(R_{1})$ and $(R_{2})$ hold, the coexistence equilibrium $\Es$ of the
kinetics \eqref{eq:factored} is locally asymptotically stable.
\end{theorem}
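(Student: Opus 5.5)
The argument reduces to reading off the spatially homogeneous ($k=0$) mode of the linearisation \eqref{eq:lin}. I will first check that the matrix in \eqref{eq:lin} is exactly the Jacobian $J^{*}$ of \eqref{eq:Jstar}.

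Using \eqref{eq:abdef} and $\ys=F(\xs)$, the entries match directly:
\begin{itemize}
\item $a_{1}=F'(\xs)p(\xs)$ and $a_{2}=-p(\xs)$;
\item $b_{1}=h'(\xs)F(\xs)$;
\item $cb_{2}=-c\xi F(\xs)$.
\end{itemize}
Hence $J^{*}=\begin{pmatrix}a_{1}&a_{2}\\ b_{1}&cb_{2}\end{pmatrix}$. It follows that $\operatorname{tr}J^{*}=a_{1}+cb_{2}=T_{0}$ and $\det J^{*}=ca_{1}b_{2}-a_{2}b_{1}=D_{0}$. These agree with \eqref{eq:tr}--\eqref{eq:det} once we put $z_{0}=0$ in \eqref{eq:Tk}--\eqref{eq:Dk}.

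The eigenvalues of $J^{*}$ are the roots of \eqref{eq:char} with $k=0$, namely $\lambda^{2}-T_{0}\lambda+D_{0}=0$. Condition $(R_{1})$ gives $T_{0}<0$ and $(R_{2})$ gives $D_{0}>0$. We then split on the sign of the discriminant $T_{0}^{2}-4D_{0}$.
\begin{itemize}
\item If $T_{0}^{2}-4D_{0}\ge 0$, both roots are real. Their product is $D_{0}>0$ and their sum is $T_{0}<0$, so both are negative.
\item If $T_{0}^{2}-4D_{0}<0$, the roots are complex conjugates with real part $T_{0}/2<0$.
\end{itemize}
This is simply the Routh--Hurwitz criterion for a $2\times 2$ matrix.

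In either case every eigenvalue of $J^{*}$ has negative real part, so $\Es$ is hyperbolic. Since the vector field \eqref{eq:factored} is smooth near $\Es$ (note $\xs>0$), the linearisation principle (Hartman--Grobman, or Lyapunov's indirect method) yields local asymptotic stability. There is no real obstacle here. The only point needing care is confirming that the linearised matrix in \eqref{eq:lin} coincides with $J^{*}$, which uses the equilibrium relations $\ys=F(\xs)$ and $h(\xs)=c\xi F(\xs)$ to cancel the terms $p'(\xs)(F(\xs)-\ys)$ and $h(\xs)-2c\xi\ys$.
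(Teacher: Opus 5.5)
Your proposal is correct and is the same argument as the paper. The paper's entire justification is the observation that $(R_{1})$ and $(R_{2})$ give $T_{0}=\operatorname{tr}J^{*}<0$ and $D_{0}=\det J^{*}>0$ for the $k=0$ mode. Your identification of the matrix in \eqref{eq:lin} with $J^{*}$ via the equilibrium relations, the trace–determinant sign argument, and the appeal to the linearisation principle simply fill in steps the paper leaves implicit.
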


\subsection{Turing instability}\label{sec:turing}

Assume that $(R_{1})$ and $(R_{2})$ hold.

\begin{theorem}\label{thm:turing}
The coexistence equilibrium $\Es$ of \eqref{eq:pde} is locally
asymptotically stable if either
\begin{equation}
(R_{3}):\ a_{1}d_{2}+cb_{2}d_{1}<0,
\end{equation}
or
\begin{equation}
(R_{4}):\ a_{1}d_{2}+cb_{2}d_{1}>0 \ \text{ and } \ S<0,
\end{equation}
and is unstable if
\begin{equation}
(R_{5}):\ a_{1}d_{2}+cb_{2}d_{1}>0 \ \text{ and } \ S>0,
\end{equation}
where $S=(a_{1}d_{2}+cb_{2}d_{1})^{2}-4d_{1}d_{2}(ca_{1}b_{2}-a_{2}b_{1})$.
Under $(R_{5})$ the equilibrium is stable for the kinetics and unstable for
\eqref{eq:pde}, so that the Turing bifurcation surface of $\Es$ is given by
$a_{1}d_{2}+cb_{2}d_{1}>0$ together with $S=0$.
\end{theorem}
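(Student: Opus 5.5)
The plan is a mode-by-mode spectral argument built on the characteristic equation \eqref{eq:char}. Since the eigenfunctions $\cos(ku/l)$ diagonalise the Neumann Laplacian, the spectrum of the linearisation \eqref{eq:lin} is exactly the union over $k\in\N_0$ of the roots of $\lambda^2-T_k\lambda+D_k=0$. The principle of linearised stability for semilinear parabolic systems then reduces the question to two cases. If every root has negative real part, with a uniform gap, $\Es$ is locally asymptotically stable. If some root has positive real part, $\Es$ is unstable. For a real quadratic, both roots have negative real part if and only if $T_k<0$ and $D_k>0$, and there is a positive real root whenever $D_k<0$.

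\emph{Step 1 (trace).} By $(R_1)$, $a_1+cb_2<0$, so $T_k=-(d_1+d_2)z_k+(a_1+cb_2)\le a_1+cb_2<0$ for all $k$. Instability can therefore only come from the sign of $D_k$. Moreover $T_k\to-\infty$ while $D_k\sim d_1d_2z_k^2\to+\infty$, so for large $k$ both roots lie to the left of a fixed line $\mathrm{Re}\,\lambda\le -\varepsilon$. This supplies the uniform spectral gap needed for stability, once finitely many modes are controlled.

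\emph{Step 2 (determinant as a quadratic in $z$).} Set $D(z)=d_1d_2z^2-Bz+C$ with $B=a_1d_2+cb_2d_1$ and $C=ca_1b_2-a_2b_1>0$ by $(R_2)$. Under $(R_3)$, $B<0$, so every coefficient gives a positive contribution for $z\ge0$, and $D(z)\ge C>0$. Under $(R_4)$, $B>0$ but the discriminant $S=B^2-4d_1d_2C<0$, so $D(z)>0$ for all real $z$. In either case $D_k>0$ and $T_k<0$ for every $k$, which together with Step 1 gives local asymptotic stability.

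\emph{Step 3 (instability).} Under $(R_5)$, $B>0$ and $S>0$, so $D$ has two positive roots $z_\pm=(B\pm\sqrt S)/(2d_1d_2)$ and is negative on $(z_-,z_+)$. At any such $z$ one has $T<0<-D$, so one root of the quadratic is real and positive. To get instability of \eqref{eq:pde}, some $z_k=k^2/l^2$ must fall in $(z_-,z_+)$. This is the delicate point: on a fixed bounded domain that need not happen. I would handle it by reading the statement as one about the dispersion relation, i.e.\ for $l$ large enough (or generic $l$) so that the lattice $\{k^2/l^2\}$ meets the open interval. I would state this caveat explicitly, noting that the interval is nonempty and that $z_k$ becomes dense in $[0,\infty)$ as $l\to\infty$.

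Finally, $(R_1)$--$(R_2)$ already give kinetic stability through Theorem~\ref{thm:odestable}, so under $(R_5)$ the instability is diffusion-driven. The boundary between $(R_4)$ and $(R_5)$, namely $B>0$ and $S=0$, is where $\min_z D(z)=C-B^2/(4d_1d_2)$ vanishes at $z=B/(2d_1d_2)$. That is the Turing bifurcation surface. The main obstacle is the discreteness issue in Step 3.
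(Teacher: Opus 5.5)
Your Steps 1 and 2 are the paper's argument: $T_{k}\le T_{0}<0$ from $(R_{1})$, and positivity of the quadratic $D(z)$ under $(R_{3})$ or $(R_{4})$. Your remark on the uniform spectral gap fills in a detail the paper leaves implicit.

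The caveat in Step 3 is not a weakness of your proposal. It is a real defect in the statement. The paper's proof simply asserts that under $(R_{5})$ some $k^{*}\in\N$ has $D_{k^{*}}<0$, and on a fixed interval this need not hold. The paper's own data in Sec.~\ref{sec:turingnum} give a counterexample. With $c=1.1$, $d_{2}=1$, $l=1$, one finds $S=0$ at $d_{1}\approx0.0168$, while the first lattice threshold is $d_{1}^{k=2}=0.0162039682$. Take $d_{1}=0.0165$:
\begin{itemize}
\item $a_{1}d_{2}+cb_{2}d_{1}\approx0.162>0$ and $S\approx5.5\times10^{-4}>0$, so $(R_{5})$ holds;
\item but $(z_{-},z_{+})\approx(4.20,5.62)$ contains no $k^{2}$, so $D_{k}>0$ for all $k$ and $\Es$ is stable;
\item the paper itself says as much, stating stability for every $d_{1}>d_{1}^{k=2}$.
\end{itemize}

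So keep the caveat, but sharpen it. Genericity in $l$ does not rescue the claim. The values of $l$ for which the lattice misses $(z_{-},z_{+})$ fill whole intervals, and in the example $l=1$ lies in one of them. What is true is that $\Es$ is unstable as soon as $l(\sqrt{z_{+}}-\sqrt{z_{-}})>1$. The open interval $(l\sqrt{z_{-}},l\sqrt{z_{+}})$ then has length greater than one, so it contains an integer $k\ge1$, and $D_{k}<0$.

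In general, under $(R_{1})$--$(R_{2})$, condition $(R_{5})$ is necessary but not sufficient for instability. The exact criterion is that some $k^{2}/l^{2}$ lies in $(z_{-},z_{+})$. Correspondingly, $S=0$ is the Turing threshold of the unbounded problem. On $(0,l\pi)$ the threshold is $\max_{k}d_{1}^{k}$, and that is what the paper actually uses in its numerical sections.
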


\begin{proof}
The argument is the standard dispersion-relation
analysis~\cite{murray2003mathematical,yi2009bifurcation}. From \eqref{eq:Tk},
$T_{k+1}<T_{k}$ for all $k\ge0$, and $(R_{1})$ gives $T_{k}<0$ for all $k\ge0$.
If $(R_{3})$ holds then, by $(R_{2})$, $D_{k}>0$ for all $k\ge 0$ and every root
of \eqref{eq:char} has negative real part. If $(R_{4})$ holds then the quadratic
\eqref{eq:Dk} in $z_{k}$ has no real root, so again $D_{k}>0$ for all $k$. If
$(R_{5})$ holds there exists $k^{*}\in\N$ with $D_{k^{*}}<0$, so \eqref{eq:char}
has a positive real root and $\Es$ loses stability: this is the Turing
instability~\cite{turing1952chemical,segel1972dissipative}.
\end{proof}

\subsection{Normal form of the Turing bifurcation}\label{sec:turingnf}

Assume $(R_{1})$, $(R_{2})$, $a_{1}d_{2}+cb_{2}d_{1}>0$ and $S=0$. There is then
$k^{*}\in\N$ for which the eigenvalue associated with the critical mode is zero
and simple for the operator
\begin{equation}
L=\begin{pmatrix}
F'(\xs)p(\xs)+d_{1}\Delta & -p(\xs)\\
h'(\xs)F(\xs) & -c\xi F(\xs)+d_{2}\Delta
\end{pmatrix}.
\end{equation}
Let $q$ be the eigenvector of $L$ for the zero eigenvalue and $q^{*}$ the
corresponding eigenvector of the adjoint $L^{*}$, normalized by
$\ip{q^{*}}{q}=1$; here
\begin{equation}
L^{*}=\begin{pmatrix}
F'(\xs)p(\xs)+d_{1}\Delta & h'(\xs)F(\xs)\\
-p(\xs) & -c\xi F(\xs)+d_{2}\Delta
\end{pmatrix}.
\end{equation}
Translating $X_{1}=x-\xs$, $Y_{1}=y-\ys$ turns \eqref{eq:pde} into
\begin{equation}\label{eq:XLX}
\frac{\partial X}{\partial t}=LX+\tfrac12 R_{2}(X,X)+\tfrac16 R_{3}(X,X,X)
+O(\|X\|^{4}),\qquad u\in\Om,\ t>0,
\end{equation}
where $X=(X_{1},Y_{1})\in H^{2}(\Om)\times H^{2}(\Om)$ and, for
$U=(U_{1},U_{2})^{T}$, $V=(V_{1},V_{2})^{T}$, $W=(W_{1},W_{2})^{T}$,
\begin{align}
R_{2}(U,V)&=\begin{pmatrix}
2a_{20}U_{1}V_{1}+a_{11}(U_{1}V_{2}+U_{2}V_{1})\\
2b_{20}U_{1}V_{1}+2b_{02}U_{2}V_{2}+b_{11}(U_{1}V_{2}+U_{2}V_{1})
\end{pmatrix},\\
R_{3}(U,V,W)&=\begin{pmatrix}
6a_{30}U_{1}V_{1}W_{1}+2a_{21}(U_{1}V_{1}W_{2}+U_{1}V_{2}W_{1}+U_{2}V_{1}W_{1})\\
6b_{30}U_{1}V_{1}W_{1}+2b_{21}(U_{1}V_{1}W_{2}+U_{1}V_{2}W_{1}+U_{2}V_{1}W_{1})
\end{pmatrix},
\end{align}
the coefficients $a_{ij}$, $b_{ij}$ being those of \eqref{eq:abcoef}.

Near the bifurcation the solution decomposes as $X=zq+w$, where $zq$ is the
critical part and $w$ lies in the complementary stable subspace
$\mathcal{X}_{s}$~\cite{wu1996theory,song2016turing}. Substituting into
\eqref{eq:XLX},
\begin{equation}\label{eq:zw}
\left\{
\begin{aligned}
\dot z&=\tfrac12\ip{q^{*}}{R_{2}(zq+w,zq+w)}
+\tfrac16\ip{q^{*}}{R_{3}(zq+w,zq+w,zq+w)}+O(|z|^{4}),\\
\dot w&=Lw+\tfrac12\bigl[R_{2}(zq+w,zq+w)-q\ip{q^{*}}{R_{2}(zq+w,zq+w)}\bigr]\\
&\quad+\tfrac16\bigl[R_{3}(\cdot)-q\ip{q^{*}}{R_{3}(\cdot)}\bigr]+O(|w|^{4}).
\end{aligned}\right.
\end{equation}
Writing $w(z)=w_{20}z^{2}/2+O(|z|^{3})$ and comparing coefficients of $z^{2}$
gives $Lw_{20}=-\bigl[R_{2}(q,q)-q\ip{q^{*}}{R_{2}(q,q)}\bigr]$, so
\begin{equation}
w_{20}=-\bigl(L|_{\mathcal{X}_{s}}\bigr)^{-1}
\bigl[R_{2}(q,q)-\ip{q^{*}}{R_{2}(q,q)}q\bigr].
\end{equation}
Defining
\begin{equation}
g_{2}^{(k)}=\tfrac12\ip{q^{*}}{R_{2}(q,q)},\qquad
g_{3}^{(k)}=\tfrac12\ip{q^{*}}{R_{2}(q,w_{20})}
+\tfrac16\ip{q^{*}}{R_{3}(q,q,q)},
\end{equation}
the reduced equation is
\begin{equation}\label{eq:turingnf}
\dot z=g_{2}^{(k)}z^{2}+g_{3}^{(k)}z^{3}+O(|z|^{4}).
\end{equation}
For a critical mode with $k^{*}\ge1$ the quadratic coefficient vanishes
identically, $g_{2}^{(k^{*})}=0$, because the projection of
$R_{2}(q\cos(k^{*}u/l),q\cos(k^{*}u/l))$ onto $\cos(k^{*}u/l)$ involves
$\int_{0}^{l\pi}\cos^{3}(k^{*}u/l)\,\dd u=0$. The bifurcation is therefore always
a pitchfork and is governed by the cubic term,
\begin{equation}\label{eq:cubicnf}
\dot z=g_{3}^{(k^{*})}z^{3}+O(|z|^{4}).
\end{equation}

\begin{remark}[Criticality]\label{rem:criticality}
Following the analysis of Jiang~\cite{jiang2019schooling}, when the system
reaches the Turing bifurcation threshold at the positive constant equilibrium
$\Es$ a pitchfork bifurcation occurs. With $d_{1}$ as bifurcation parameter,
$D_{k^{*}}<0$ for $d_{1}$ slightly below $d_{1}^{k^{*}}$, so the amplitude
equation near the threshold reads
$\dot z=\mu(d_{1})z+g_{3}^{(k^{*})}z^{3}$ with $\mu>0$ for
$d_{1}<d_{1}^{k^{*}}$. The pitchfork is therefore \emph{supercritical}, with a
pair of stable spatially heterogeneous steady states, when $g_{3}^{(k^{*})}<0$,
and subcritical when $g_{3}^{(k^{*})}>0$.
\end{remark}

\subsection{Spatial patterns generated by the Turing bifurcation}
\label{sec:turingnum}

We take $\gamma=11$, $\alpha=1$, $\delta=1$, $\beta=3.9$, $\xi=0.3$, $d_{2}=1$,
$c=1.1$ and $\Om=(0,\pi)$, i.e.\ $l=1$ and $z_{k}=k^{2}$. The coexistence
equilibrium is
\begin{equation}
\Es=(\xs,\ys)=(0.3421406015,\,1.5910639680),
\end{equation}
and the entries \eqref{eq:abdef} of the linearisation are
\begin{equation}
a_{1}=0.1707662132,\quad a_{2}=-0.2083503698,\quad
b_{1}=2.3010792140,\quad b_{2}=-0.4773191903,
\end{equation}
so that $cb_{2}=-0.5250511093$, $\operatorname{tr}J^{*}=-0.3542848961<0$ and
\begin{equation}
D_{0}=\det J^{*}=0.3897697155>0 .
\end{equation}
The dispersion relation \eqref{eq:Dk} therefore reads
\begin{equation}\label{eq:dispersion11}
D_{k}=d_{1}z_{k}^{2}-\bigl(0.1707662132-0.5250511093\,d_{1}\bigr)z_{k}
+0.3897697155 ,
\end{equation}
and solving $D_{k}=0$ for $d_{1}$ mode by mode gives the Turing thresholds
\begin{equation}\label{eq:thresholds11}
d_{1}^{k=2}=0.0162039682,\quad
d_{1}^{k=3}=0.0133813946,\quad
d_{1}^{k=4}=0.0088596159,
\end{equation}
with $d_{1}^{k=1}<0$ and $d_{1}^{k}$ decreasing for $k\ge2$. The primary Turing
threshold is thus $d_{1}=d_{1}^{k=2}=0.0162039682$: the constant steady state is
stable for $d_{1}>d_{1}^{k=2}$ and loses stability to the mode $\cos(2u)$ as
$d_{1}$ decreases through this value.

At $d_{1}=0.013$ (Fig.~\ref{fig:2}) one finds $D_{2}=-0.0579925<0$,
$D_{3}=-0.0326952<0$ and $D_{k}>0$ for $k\ne2,3$, so two modes are linearly
unstable; the cubic coefficient of the primary mode, evaluated at its threshold
$d_{1}^{k=2}$, is $g_{3}^{(2)}=-0.0730288<0$, the pitchfork is supercritical, and
direct integration of \eqref{eq:pde} converges to a stationary pattern dominated
by $\cos(2u)$, in agreement with Remark~\ref{rem:criticality}. At $d_{1}=0.0087$
(Fig.~\ref{fig:3}) the unstable band widens to $k=2,3,4$, and the corresponding
coefficient at the threshold $d_{1}^{k=4}$ is $g_{3}^{(4)}=-0.2583011<0$. We
emphasize, however, that the $k=4$ branch, although it bifurcates supercritically
from the trivial state at $d_{1}=d_{1}^{k=4}$, is \emph{not} the attractor at
$d_{1}=0.0087$: numerical integration started from a pure $\cos(4u)$ perturbation
of $\Es$ drifts to the $\cos(2u)$-dominated state, as it must, since the $k=4$
branch inherits the instability of the background at that parameter value.
Figure~\ref{fig:3} should therefore be read as a transient or as a solution of
the mode-$4$ amplitude equation rather than as a stable attractor of the full
system.

\begin{figure}[ht!]
\centering
\begin{minipage}{0.48\linewidth}\figph{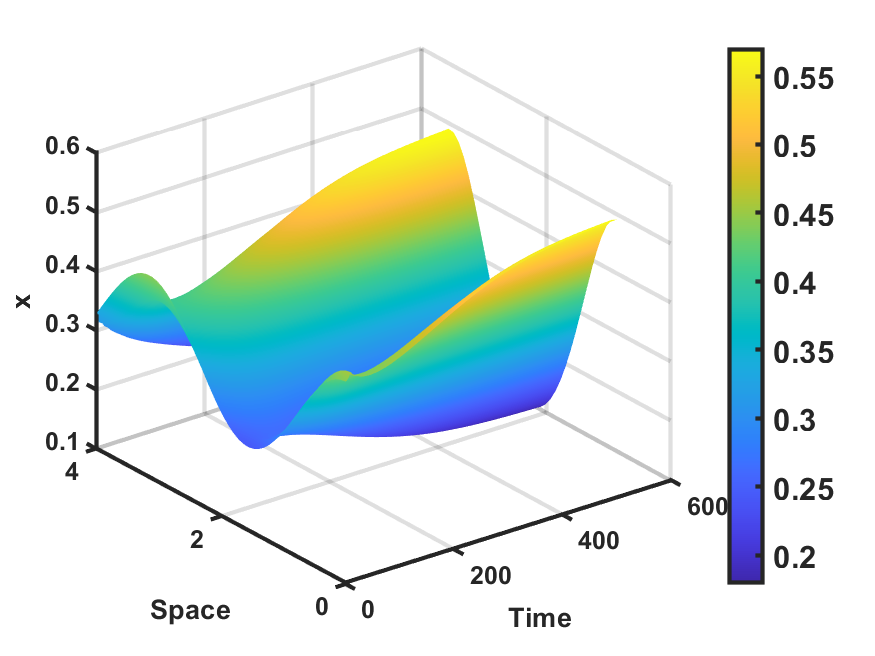}{45mm}\end{minipage}\hfill
\begin{minipage}{0.48\linewidth}\figph{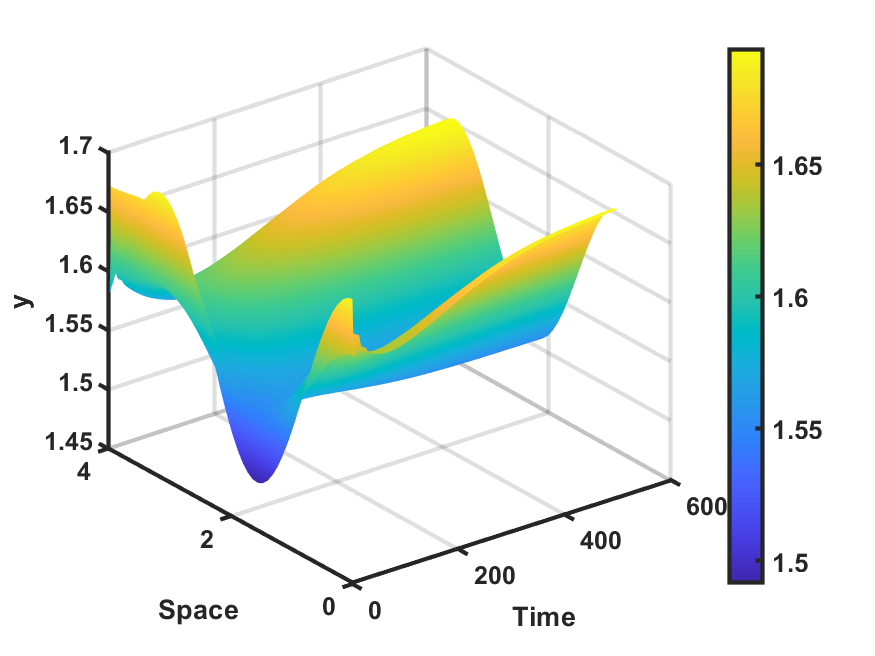}{45mm}\end{minipage}
\caption{\label{fig:2}%
$\gamma=11$, $\alpha=1$, $\delta=1$, $\beta=3.9$, $\xi=0.3$, $d_{1}=0.013$,
$d_{2}=1$, $c=1.1$, $\Om=(0,\pi)$. Initial data
$\bigl(x_{0}(u),y_{0}(u)\bigr)=(0.3421+0.1\cos 2u,\ 1.5911+0.1\cos 2u)$. The
constant state $\Es=(0.3421,1.5911)$ is unstable and \eqref{eq:pde} settles
on a stable spatially heterogeneous steady state with profile
$\propto\cos 2u$.}
\alttext{Two three-dimensional surface plots of prey and of predator density over space and time. Both fields develop a non-uniform profile that stops changing once transients decay, so the final state varies with position but not with time.}
\end{figure}

\begin{figure}[htb!]
\centering
\begin{minipage}{0.48\linewidth}\figph{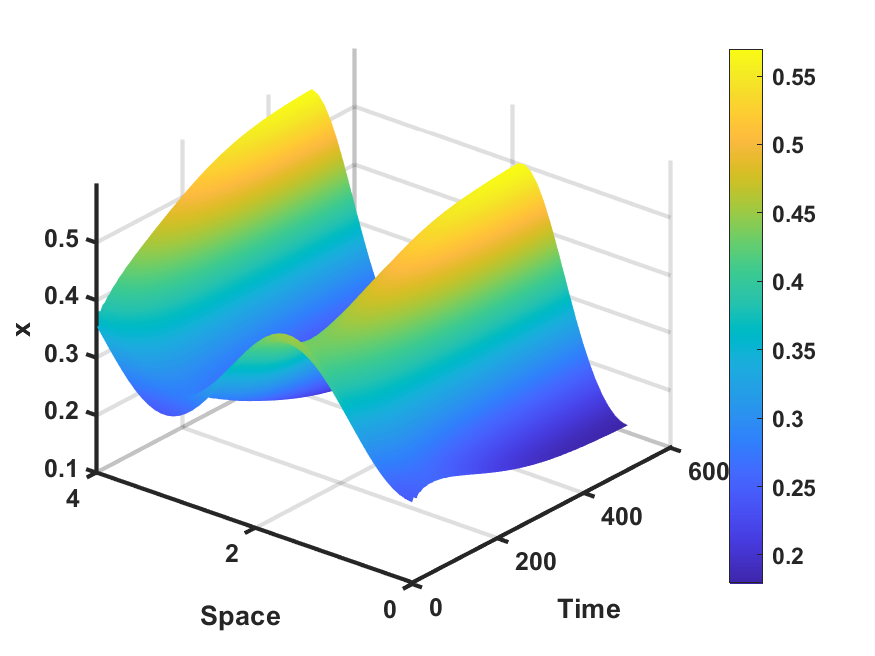}{45mm}\end{minipage}\hfill
\begin{minipage}{0.48\linewidth}\figph{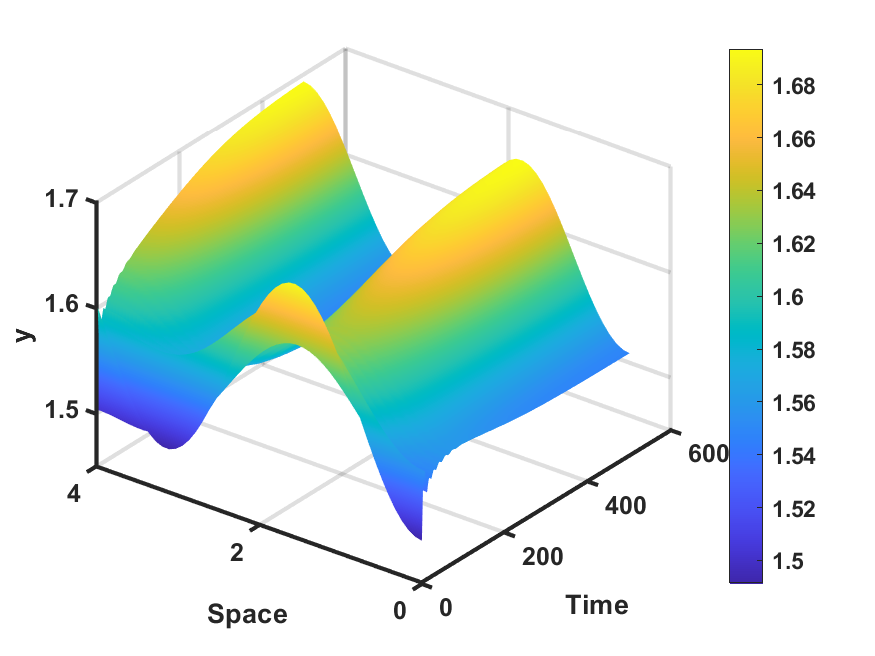}{45mm}\end{minipage}
\caption{\label{fig:3}%
Same parameters as Fig.~\ref{fig:2} but $d_{1}=0.0087$, with initial data
$\bigl(x_{0}(u),y_{0}(u)\bigr)=(0.3421-0.01\cos 4u,\ 1.5911-0.01\cos 4u)$,
showing the mode-$4$ heterogeneous state that bifurcates supercritically at
$d_{1}=d_{1}^{k=4}=0.0088596$. At this value of $d_{1}$ the modes $k=2$
and $k=3$ are also linearly unstable, so the mode-$4$ state is not the global
attractor.}
\alttext{Two three-dimensional surface plots of prey and of predator density over space and time. A higher-wavenumber profile grows from the initial perturbation and becomes stationary, with a broad plateau separated by a sharp front from a low-density region.}
\end{figure}

%% =====================================================================
\section{Spatially homogeneous Hopf bifurcation}\label{sec:hopf-pde}
%% =====================================================================

For a Hopf bifurcation of \eqref{eq:pde} the characteristic equation
\eqref{eq:char} must have purely imaginary roots, i.e.\ $T_{k}=0$ and $D_{k}>0$.
If $(R_{2})$ holds and
$S=(a_{1}d_{2}+cb_{2}d_{1})^{2}-4d_{1}d_{2}(ca_{1}b_{2}-a_{2}b_{1})<0$, then
$D_{k}>0$ for all $k\ge0$. Solving $T_{k}=0$ for $c$ gives
\begin{equation}
c=c_{k}=\frac{(d_{1}+d_{2})z_{k}-a_{1}}{b_{2}},\qquad z_{k}=\frac{k^{2}}{l^{2}} .
\end{equation}
At $c=c_{0}=-a_{1}/b_{2}$ one has $T_{0}=0$ and $T_{k}<0$ for all $k\ge1$, so
that the bifurcation is spatially homogeneous.

\begin{theorem}\label{thm:homhopf}
Assume $(R_{2})$ and $S<0$, so that $D_{k}>0$ for all $k\ge0$. Then
\eqref{eq:pde} undergoes a spatially homogeneous Hopf bifurcation at $\Es$
when $c=c_{0}=-a_{1}/b_{2}$, provided the transversality condition holds.
\end{theorem}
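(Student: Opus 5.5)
We plan to verify the hypotheses of the standard Hopf bifurcation theorem for semilinear parabolic systems (as in Yi, Wei and Shi~\cite{yi2009bifurcation} or Hassard's framework) by decomposing the linearisation \eqref{eq:lin} into the Neumann eigenmodes $\cos(ku/l)$. On the $k$-th mode the spectrum is given by the roots of \eqref{eq:char}, so the spectrum of the full linearised operator is $\bigcup_{k\ge0}\{\lambda_{1,2}^{(k)}\}$. The Hopf condition is therefore that exactly one mode carries a pair of purely imaginary eigenvalues, all remaining eigenvalues lie strictly in the left half-plane, and the critical pair crosses the imaginary axis with nonzero speed.

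\textbf{Step 1: the critical mode.} At $c=c_{0}=-a_{1}/b_{2}$ we have $T_{0}=a_{1}+c_{0}b_{2}=0$. By $(R_{2})$, $D_{0}>0$, so the $k=0$ mode has eigenvalues $\pm\ii\omega_{0}$ with $\omega_{0}=\sqrt{D_{0}}$. This pair is simple, and the eigenfunction is spatially constant, so the bifurcation is spatially homogeneous.

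\textbf{Step 2: the non-critical modes.} For $k\ge1$, \eqref{eq:Tk} gives $T_{k}=T_{0}-(d_{1}+d_{2})z_{k}=-(d_{1}+d_{2})k^{2}/l^{2}<0$. The hypothesis $S<0$, together with $(R_{2})$, implies that the quadratic \eqref{eq:Dk} in $z$ has no real root and a positive constant term, so $D_{k}>0$ for all $k$. Each mode $k\ge1$ therefore has both roots with negative real part. Since $T_{k}\to-\infty$ and $D_{k}\sim d_{1}d_{2}z_{k}^{2}$, these real parts are bounded away from zero uniformly in $k$. This rules out accumulation of spectrum on the imaginary axis and prevents any other eigenvalue from equalling $\pm\ii\omega_{0}$ or $0$.

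We must also check that $S<0$ is consistent with $c$ near $c_{0}$. Since $S$ is continuous in $c$, the inequality persists in a neighbourhood of $c_{0}$, and so do the sign conditions above.

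\textbf{Step 3: transversality.} Let $\lambda(c)=\tfrac12\bigl(T_{0}(c)\pm\ii\sqrt{4D_{0}(c)-T_{0}(c)^{2}}\bigr)$ near $c_{0}$. Then
\[
\frac{\dd}{\dd c}\mathrm{Re}\,\lambda(c)\Big|_{c=c_{0}}=\tfrac12\frac{\dd T_{0}}{\dd c}.
\]
If we treat $a_{1},b_{2}$ as fixed, as in \eqref{eq:lin}, this equals $b_{2}/2=-\xi\ys/2<0$, which is nonzero.

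Strictly, however, $\Es$ itself depends on $c$ through \eqref{eq:interior}, so $a_{1},b_{2}$ vary with $c$. The honest derivative is
\[
\tfrac12\,\frac{\dd}{\dd c}\bigl(F'(\xs)p(\xs)-c\xi F(\xs)\bigr),
\]
where $\dd\xs/\dd c$ is obtained by implicitly differentiating $h(\xs)=c\xi F(\xs)$:
\[
\frac{\dd\xs}{\dd c}=\frac{\xi F(\xs)}{h'(\xs)-c\xi F'(\xs)}.
\]
The denominator is nonzero because $\det J^{*}>0$ by \eqref{eq:det}. This is why the theorem assumes the transversality condition explicitly rather than proving it. I would state it in the form above and verify numerically at \eqref{eq:params} that it is negative, which corresponds to stability for $c>c_{0}$, consistent with Section~\ref{sec:hopfnum}.

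\textbf{Conclusion and main obstacle.} With a simple imaginary pair on the $k=0$ mode, the rest of the spectrum uniformly in the left half-plane, and transversal crossing, the Hopf theorem for analytic semigroups on $\mathcal{X}$ yields a branch of spatially homogeneous periodic solutions. These coincide with the ODE cycles found in Section~\ref{sec:hopf-ode}. The main obstacle is the transversality step, because of the implicit dependence of $\Es$ on $c$.
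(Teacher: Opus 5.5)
Your proposal is correct and follows the paper's argument: at $c=c_{0}$ one has $T_{0}=0$ with $D_{0}>0$ by $(R_{2})$, while $T_{k}=-(d_{1}+d_{2})z_{k}<0$ for $k\ge1$ and $S<0$ forces $D_{k}>0$ for all $k$, so only the homogeneous mode carries the imaginary pair. Your transversality step refines the paper rather than departing from it: the paper's value $b_{2}/2$ in \eqref{eq:transhopf} holds $a_{1},b_{2}$ fixed, whereas including $\dd\xs/\dd c=\xi F(\xs)/(h'(\xs)-c\xi F'(\xs))$ as you propose gives $\dd\,\mathrm{Re}\,\lambda/\dd c\approx-0.147$ at \eqref{eq:params} (not $-0.203$), which is still negative, so the conclusion stands.
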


At the bifurcation the operator $L$ has the purely imaginary eigenvalues
$\pm \ii\omega_{0}$ of Sec.~\ref{sec:hopf-ode}. Put
\begin{equation}
\eta_{1}=\frac{-a_{1}+\ii\omega_{0}}{a_{2}},\qquad
\eta_{2}=\frac{-a_{1}-\ii\omega_{0}}{b_{1}},
\end{equation}
and define
\begin{equation}
q^{H}=\begin{pmatrix}1\\ \eta_{1}\end{pmatrix},\qquad
q^{H*}=\frac{a_{2}b_{1}}{\pi\bigl(a_{2}b_{1}+(a_{1}+\ii\omega_{0})^{2}\bigr)}
\begin{pmatrix}1\\ \eta_{2}\end{pmatrix}.
\end{equation}
Here the inner product on $\mathcal{X}_{\mathbb{C}}$ is the Hermitian product
\begin{equation}\label{eq:ipdef}
\ip{a}{b}=\int_{0}^{l\pi}\overline{a}^{\,T}b\,\dd r ,
\end{equation}
which is conjugate-linear in its first argument. With this normalization
$\ip{q^{H*}}{q^{H}}=1$, $Lq^{H}=\ii\omega_{0}q^{H}$ and
$L^{*}q^{H*}=-\ii\omega_{0}q^{H*}$.

Decompose $\mathcal{X}=\mathcal{X}^{c}\oplus\mathcal{X}^{s}$ with
$\mathcal{X}^{c}=\{zq^{H}+\bar z\bar q^{H}:z\in\mathbb{C}\}$ and
$\mathcal{X}^{s}=\{w\in\mathcal{X}:\ip{q^{H*}}{w}=0\}$~\cite{wu1996theory,li2013hopf},
so that every $X=(x,y)\in\mathcal{X}$ can be written
\begin{equation}\label{eq:decomp}
X=zq^{H}+\bar z\bar q^{H}+w .
\end{equation}
In the $(z,w)$ coordinates \eqref{eq:XLX} becomes
\begin{equation}\label{eq:zwH}
\left\{
\begin{aligned}
\dot z&=\ii\omega_{0}z+\ip{q^{H*}}{R_{2}(X)}+\ip{q^{H*}}{R_{3}(X)}+O(|z|^{4}),\\
\dot w&=Lw+R_{2}(X)-\ip{q^{H*}}{R_{2}(X)}q^{H}+R_{3}(X)
-\ip{q^{H*}}{R_{3}(X)}q^{H}+O(|w|^{4}).
\end{aligned}\right.
\end{equation}
Setting
$w(z,\bar z)=w_{20}z^{2}/2+w_{11}z\bar z+w_{02}\bar z^{2}/2+O(|z|^{3})$ and
expanding,
\begin{align}
R_{2}(X)&=R_{2}(q^{H},q^{H})z^{2}+2R_{2}(q^{H},\bar q^{H})z\bar z
+R_{2}(\bar q^{H},\bar q^{H})\bar z^{2}+O(|z|^{3}),\\
R_{3}(X)&=R_{3}(q^{H},q^{H},q^{H})z^{3}+3R_{3}(q^{H},q^{H},\bar q^{H})z^{2}\bar z\notag\\
&\quad+3R_{3}(q^{H},\bar q^{H},\bar q^{H})z\bar z^{2}
+R_{3}(\bar q^{H},\bar q^{H},\bar q^{H})\bar z^{3}+O(|z|^{4}),
\end{align}
so that
\begin{equation}
\ip{q^{H*}}{R_{2}(X)}+\ip{q^{H*}}{R_{3}(X)}
=g_{20}z^{2}+g_{11}z\bar z+g_{02}\bar z^{2}+g_{30}z^{3}+g_{21}z^{2}\bar z
+g_{12}z\bar z^{2}+g_{03}\bar z^{3}+O(|z|^{4}),
\end{equation}
with
\begin{align}
&g_{20}=\ip{q^{H*}}{R_{2}(q^{H},q^{H})},\quad
g_{11}=\ip{q^{H*}}{R_{2}(q^{H},\bar q^{H})},\quad
g_{02}=\ip{q^{H*}}{R_{2}(\bar q^{H},\bar q^{H})},\notag\\
&g_{30}=\ip{q^{H*}}{R_{3}(q^{H},q^{H},q^{H})},\notag\\
&g_{21}=\ip{q^{H*}}{R_{3}(q^{H},q^{H},\bar q^{H})}
+\ip{q^{H*}}{2R_{2}(q^{H},w_{11})+R_{2}(\bar q^{H},w_{20})} .
\end{align}
Comparing coefficients of $z^{2}$ and $z\bar z$ in the $w$-equation,
\begin{equation}
(2\ii\omega_{0}I-L)w_{20}=H_{20},\qquad -Lw_{11}=H_{11},
\end{equation}
where $H_{20}=R_{2}(q^{H},q^{H})-g_{20}q^{H}-\bar g_{02}\bar q^{H}$ and
$H_{11}=R_{2}(q^{H},\bar q^{H})-g_{11}q^{H}-\bar g_{11}\bar q^{H}$, whence
$w_{20}=(2\ii\omega_{0}I-L)^{-1}H_{20}$ and $w_{11}=-L^{-1}H_{11}$.

\begin{remark}\label{rem:w0}
Because the Hopf mode is spatially homogeneous, $H_{20}$ and $H_{11}$ are
constant in $u$ and lie in the two-dimensional space spanned by $q^{H}$ and
$\bar q^{H}$, which is exactly the center subspace of the $k=0$ block. Both
projections therefore annihilate them: $H_{20}=H_{11}=0$ and consequently
$w_{20}=w_{11}=0$. This is why $g_{21}$ reduces to
$\ip{q^{H*}}{R_{3}(q^{H},q^{H},\bar q^{H})}$.
\end{remark}

The reduced equation on the center manifold is thus
\begin{equation}\label{eq:reduced}
\dot z=\ii\omega_{0}z+\tfrac12 g_{20}z^{2}+g_{11}z\bar z
+\tfrac12 g_{02}\bar z^{2}+\tfrac12 g_{21}z^{2}\bar z+O(|z|^{4}),
\end{equation}
and, following Hassard, Kazarinoff and Wan~\cite{hassard1981theory} (the same
quantity is obtained from the invariant expression of
Kuznetsov~\cite{kuznetsov2004elements}),
\begin{equation}\label{eq:c1}
c_{1}(c_{0})=\frac{\ii}{2\omega_{0}}
\Bigl(g_{20}g_{11}-2|g_{11}|^{2}-\tfrac13|g_{02}|^{2}\Bigr)+\tfrac12 g_{21},
\qquad
\ell_{1}=\mathrm{Re}\,c_{1}(c_{0})
=\tfrac12\mathrm{Re}\,g_{21}-\frac{1}{2\omega_{0}}\mathrm{Im}\,(g_{20}g_{11}).
\end{equation}
The Hopf bifurcation is supercritical if $\ell_{1}<0$ and subcritical if
$\ell_{1}>0$.

\subsection{Numerical evaluation}\label{sec:hopfpdenum}

For $\gamma=11$, $\alpha=1$, $\delta=1$, $\beta=3.9$, $\xi=0.3$ the
biologically relevant Hopf point is
\begin{equation}
c_{0}=0.1004263750,\qquad \Es=(0.0640448876,\,1.3561030602),
\end{equation}
at which
\begin{equation}
J^{*}=\begin{pmatrix}
0.0408565543 & -0.0469521848\\
2.8424967620 & -0.0408565543
\end{pmatrix},
\end{equation}
with eigenvalues $\pm\ii\omega_{0}$,
$\omega_{0}=\sqrt{\det J^{*}}=0.3630319203$. The critical eigenvector is
\begin{equation}
q^{H}=\begin{pmatrix}1\\ 0.8701736568-7.7319494721\,\ii\end{pmatrix},
\qquad \ip{q^{H*}}{q^{H}}=1,
\end{equation}
and the normal-form coefficients are
\begin{equation}\label{eq:gvals}
\begin{aligned}
g_{20}&=\phantom{-}2.6473595630+5.5991152020\,\ii,
g_{11}&=-0.0042683804-0.2690342332\,\ii,\\
g_{02}&=-2.6558963230-5.2053936810\,\ii,
g_{21}&=-2.1729366340-3.8650870440\,\ii .
\end{aligned}
\end{equation}
Consequently
\begin{equation}
c_{1}(c_{0})=-0.0726049860-15.7509488338\,\ii,
\end{equation}
so that
\begin{equation}\label{eq:l1}
\ell_{1}=\mathrm{Re}\,c_{1}(c_{0})=-0.0726049860<0 .
\end{equation}

The Hopf bifurcation is therefore supercritical and the bifurcating periodic
solutions are stable: at $c=c_{0}$ the coexistence equilibrium
$\Es=(0.0640449,1.3561031)$ undergoes a supercritical Hopf bifurcation with
frequency $\omega_{0}$. The stability of $\Es$ is governed by the critical value
$c_{0}=0.1004263750$; $\Es$ is unstable for $0<c<c_{0}$ and locally
asymptotically stable for $c>c_{0}$. When $c$ crosses $c_{0}$, \eqref{eq:pde}
undergoes a spatially homogeneous Hopf bifurcation at $\Es$ and the bifurcating
periodic solutions are stable. Numerical illustrations on $\Om=(0,\pi)$ are shown
in Figs.~\ref{fig:4}--\ref{fig:6}.

\begin{figure}[ht!]
\centering
\begin{minipage}{0.48\linewidth}\figph{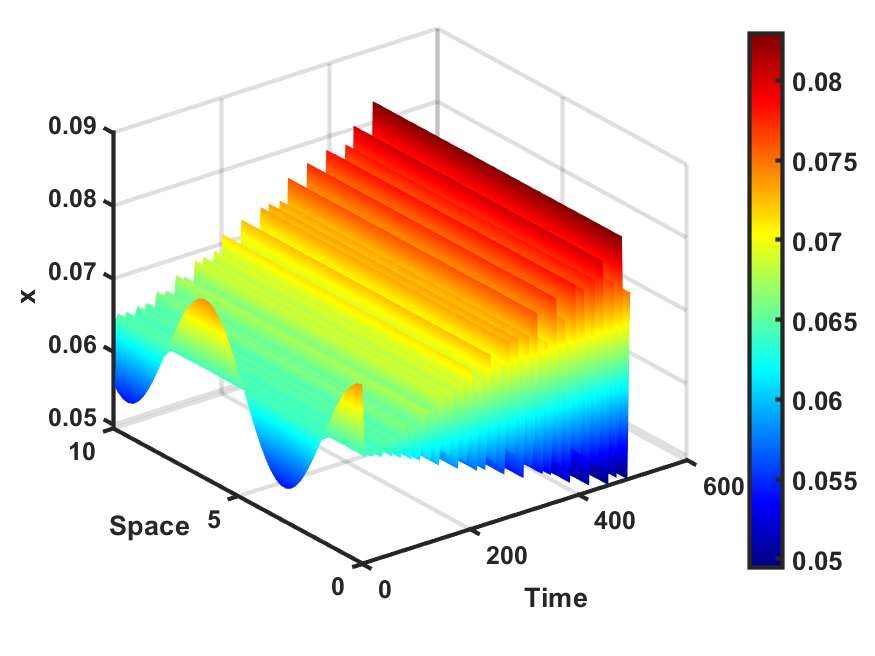}{45mm}\end{minipage}\hfill
\begin{minipage}{0.48\linewidth}\figph{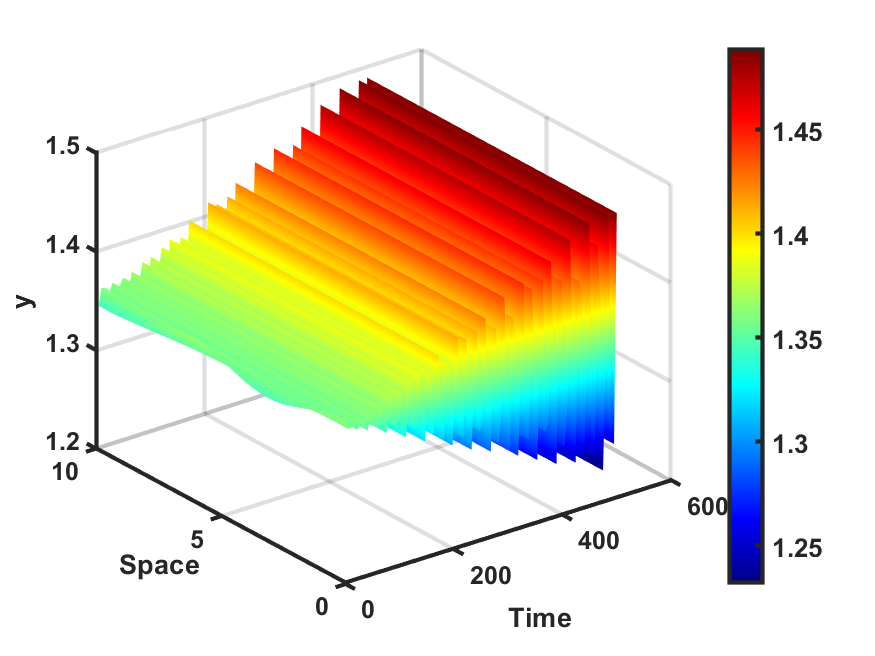}{45mm}\end{minipage}
\caption{\label{fig:4}%
$\gamma=11$, $\alpha=1$, $\delta=1$, $\beta=3.9$, $\xi=0.3$, $d_{1}=d_{2}=1$,
$c=0.1004263750$. Initial data
$(0.0640448876+0.01\cos u,\ 1.3561030602+0.01\cos u)$. The constant state
$\Es=(0.0640448876,1.3561030602)$ is unstable and \eqref{eq:pde} approaches a
stable spatially homogeneous periodic solution surrounding $\Es$.}
\alttext{Two three-dimensional surface plots of prey and of predator density over space and time. Both fields oscillate with a fixed period and remain flat across space at every instant, the amplitude approaching a constant value.}
\end{figure}

\begin{figure}[ht!]
\centering
\begin{minipage}{0.48\linewidth}\figph{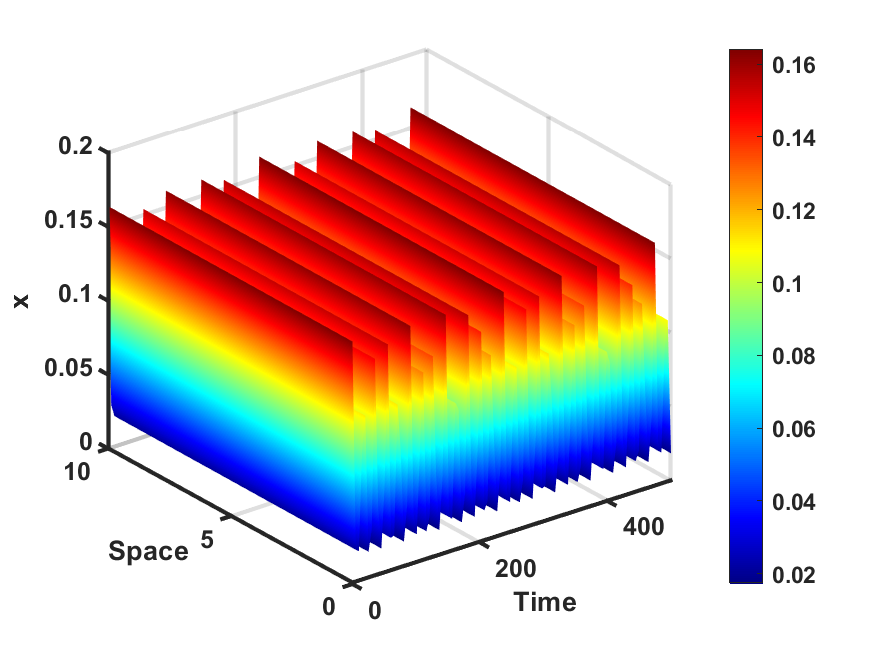}{45mm}\end{minipage}\hfill
\begin{minipage}{0.48\linewidth}\figph{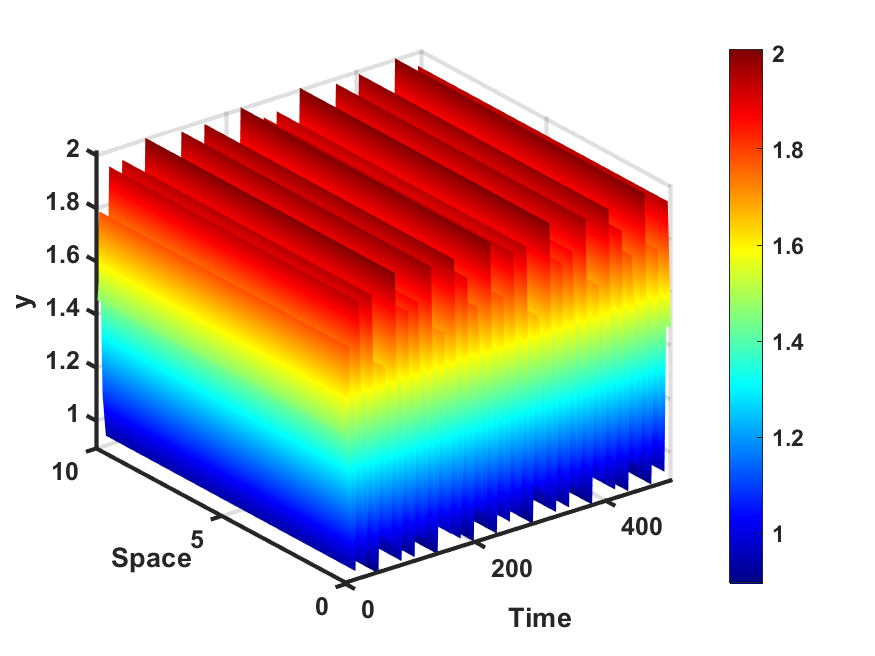}{45mm}\end{minipage}
\caption{\label{fig:5}%
As in Fig.~\ref{fig:4} but with spatially uniform initial data
$(0.0640448876+0.1,\ 1.3561030602+0.1)$.}
\alttext{Two three-dimensional surface plots of prey and of predator density over space and time, started from data with no spatial variation. Both settle onto the same spatially flat oscillation reached in the previous figure.}
\end{figure}

\begin{figure}[ht!]
\centering
\begin{minipage}{0.48\linewidth}\figph{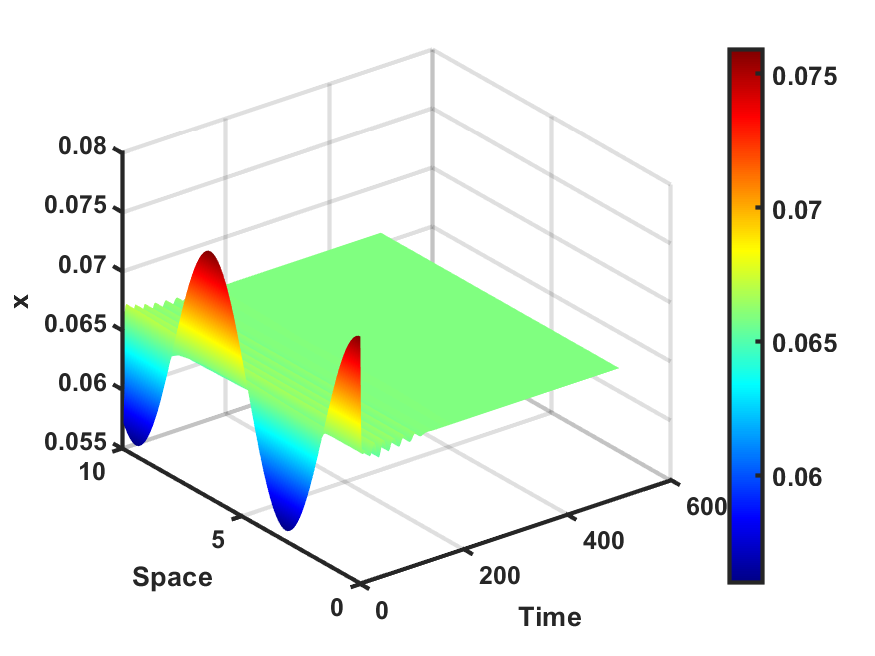}{45mm}\end{minipage}\hfill
\begin{minipage}{0.48\linewidth}\figph{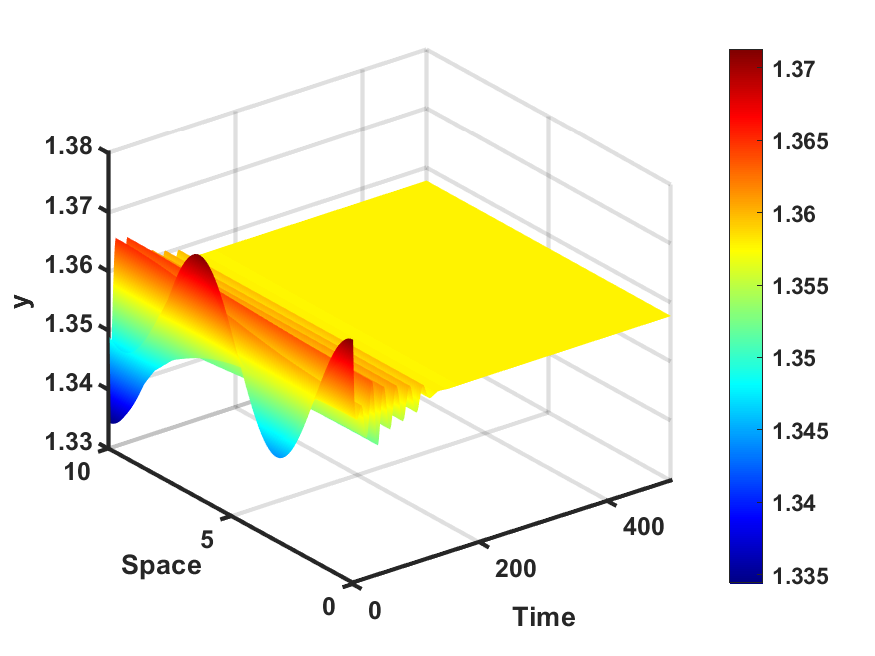}{45mm}\end{minipage}
\caption{\label{fig:6}%
$\gamma=11$, $\alpha=1$, $\delta=1$, $\beta=3.9$, $\xi=0.3$, $d_{1}=d_{2}=1$,
$c=0.11>c_{0}$. Initial data
$(0.0659+0.01\cos u,\ 1.3577+0.01\cos u)$. The constant state
$\Es=(0.0659315,1.3577444)$ is locally asymptotically stable.}
\alttext{Two three-dimensional surface plots of prey and of predator density over space and time. Initial oscillations decay steadily and both fields flatten onto a constant value, indicating return to the uniform steady state.}
\end{figure}

%% =====================================================================
\section{Turing--Hopf bifurcation}\label{sec:th}
%% =====================================================================

By Sec.~\ref{sec:hopf-pde}, \eqref{eq:pde} undergoes a spatially homogeneous
Hopf bifurcation when
\begin{equation}
c=c^{*}=-\frac{a_{1}}{b_{2}}=\frac{F'(\xs)p(\xs)}{\xi \ys},
\end{equation}
so that the Hopf curve in the $(d_{1},c)$-plane is the horizontal line
\begin{equation}
\mathcal{H}_{c}:\ c=c^{*}=\frac{F'(\xs)p(\xs)}{\xi \ys}.
\end{equation}
Taking $d_{1}$ as the bifurcation parameter for the Turing instability and
solving $D_{k}=0$ in \eqref{eq:Dk} gives the family of Turing curves
\begin{equation}\label{eq:turingcurve}
\mathcal{T}_{d_{1}}:\quad
d_{1}=d_{1}^{k}=\frac{a_{1}d_{2}z_{k}-(ca_{1}b_{2}-a_{2}b_{1})}
{z_{k}(d_{2}z_{k}-cb_{2})} .
\end{equation}
Since only positive diffusivities are admissible, define
\begin{equation}\label{eq:Qdef}
Q=\{k\in\N:\ d_{1}^{k}>0\},
\end{equation}
and, when $Q\ne\emptyset$, let $k^{*}=\arg\max_{k\in Q}d_{1}^{k}$ be the first
mode to become unstable as $d_{1}$ decreases.

Writing $\lambda_{1}(c)=\alpha_{1}(c)+\ii\beta_{1}(c)$ with
$\alpha_{1}(c^{*})=0$ and $\beta_{1}(c^{*})=\omega_{0}$, one has
\begin{equation}\label{eq:transhopf}
b_{2}<0,\qquad
\frac{\dd\,\mathrm{Re}\,\lambda_{1}(c)}{\dd c}\Big|_{c=c^{*}}=\frac{b_{2}}{2}<0,
\end{equation}
so the Hopf bifurcation is transversal. For the Turing bifurcation,
\begin{equation}\label{eq:transturing}
\frac{\partial D_{k}}{\partial d_{1}}\bigg|_{(d_{1},c)=(d_{1}^{k^{*}},c^{*})}
=z_{k^{*}}\bigl(d_{2}z_{k^{*}}-c^{*}b_{2}\bigr)>0,
\end{equation}
so the Turing bifurcation is transversal as well. We therefore have:

\begin{theorem}\label{thm:th}
Let $(R_{2})$ hold. If $Q=\emptyset$, then \eqref{eq:pde} does not exhibit a
Turing--Hopf bifurcation. If $Q\ne\emptyset$, then \eqref{eq:pde} undergoes a
Turing--Hopf bifurcation at $\Es$ when $(d_{1},c)=(d_{1}^{k^{*}},c^{*})$.
\end{theorem}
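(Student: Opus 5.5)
To prove Theorem~\ref{thm:th}, I would check the defining conditions of a codimension-two Turing--Hopf point directly on the characteristic equation \eqref{eq:char}. At $(d_{1},c)=(d_{1}^{k^{*}},c^{*})$ the linearisation must have a simple pair $\pm\ii\omega_{0}$ coming from the mode $k=0$, a simple zero eigenvalue coming from the mode $k^{*}$, and every other eigenvalue in the open left half-plane. In addition, the two curves $\mathcal{H}_{c}$ and $\mathcal{T}_{d_{1}}$ must cross transversally. The negative statement for $Q=\emptyset$ is handled first. If $d_{1}^{k}\le0$ for every $k\in\N$, then for admissible $d_{1}>0$ and $c$ near $c^{*}$ I would show that no $D_{k}$ with $k\ge1$ can vanish. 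Since $d_{2}z_{k}-cb_{2}>0$ (because $b_{2}<0$), the coefficient of $d_{1}$ in $D_{k}$ is positive, so $D_{k}$ increases in $d_{1}$. Its value at $d_{1}=0$ is $-a_{1}d_{2}z_{k}+(ca_{1}b_{2}-a_{2}b_{1})$, which is $\ge0$ exactly when $d_{1}^{k}\le0$. Hence $D_{k}>0$ for all $d_{1}>0$, no nonzero mode produces a zero eigenvalue, and no Turing branch can meet $\mathcal{H}_{c}$.

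For $Q\ne\emptyset$ the argument goes mode by mode.

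\textbf{Mode $k=0$.} At $c=c^{*}=-a_{1}/b_{2}$ we have $T_{0}=0$, and $(R_{2})$ gives $D_{0}=c^{*}a_{1}b_{2}-a_{2}b_{1}>0$. So the $k=0$ block has eigenvalues $\pm\ii\omega_{0}$ with $\omega_{0}=\sqrt{D_{0}}$, and this pair is independent of $d_{1}$.

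\textbf{Modes $k\ge1$, trace.} For $k\ge1$, \eqref{eq:Tk} gives $T_{k}=-(d_{1}+d_{2})z_{k}<0$.

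\textbf{Mode $k^{*}$.} By definition of $d_{1}^{k^{*}}$, $D_{k^{*}}=0$ there. So the $k^{*}$ block has eigenvalues $0$ and $T_{k^{*}}<0$, and the zero eigenvalue is simple.

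\textbf{Other modes $k\ge1$, $k\ne k^{*}$.} Here I need $D_{k}>0$ at $d_{1}=d_{1}^{k^{*}}$. I would use the monotonicity in $d_{1}$ noted above: $D_{k}$ is increasing in $d_{1}$ and vanishes at $d_{1}=d_{1}^{k}$. For $k\notin Q$ this gives positivity at once. For $k\in Q$, the maximality of $d_{1}^{k^{*}}$ gives $d_{1}^{k^{*}}\ge d_{1}^{k}$, with strict inequality if the maximiser is unique, and hence $D_{k}>0$. I expect this to be the main obstacle. The statement tacitly assumes the argmax is unique, which is a non-degeneracy condition; if two modes tie the point is of higher codimension. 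I would state this genericity assumption explicitly rather than try to rule ties out.

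\textbf{Transversality.} This comes from \eqref{eq:transhopf} and \eqref{eq:transturing}. From $T_{0}$ one gets $\dd\,\mathrm{Re}\,\lambda_{1}/\dd c=b_{2}/2\neq0$. Differentiating the simple zero root of the $k^{*}$ block gives $\partial\lambda/\partial d_{1}=-(\partial D_{k^{*}}/\partial d_{1})/T_{k^{*}}\neq0$, since $\partial D_{k^{*}}/\partial d_{1}=z_{k^{*}}(d_{2}z_{k^{*}}-c^{*}b_{2})>0$.

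\textbf{Crossing of the curves.} The Hopf curve is the horizontal line $c=c^{*}$ in the $(d_{1},c)$-plane. The Turing curve $d_{1}=d_{1}^{k^{*}}(c)$ is a graph over $c$, so it meets that line at a single point and the two curves cross non-tangentially.

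Together these steps show that the spectrum consists of $\pm\ii\omega_{0}$ and $0$, each simple, with all remaining eigenvalues having negative real part, and that the unfolding in $(d_{1},c)$ is transversal. This is the definition of a Turing--Hopf bifurcation as used in \cite{song2016turing,jiang2020formulation}.
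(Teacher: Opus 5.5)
Your proposal follows the paper's route: check the spectrum mode by mode at $(d_{1}^{k^{*}},c^{*})$, then take transversality from \eqref{eq:transhopf} and \eqref{eq:transturing}. Where you go beyond the paper, the additions are sound and useful. The monotonicity of $D_{k}$ in $d_{1}$, with slope $z_{k}(d_{2}z_{k}-cb_{2})>0$, is exactly what justifies the paper's unproved assertion that $D_{k}>0$ for $k\ne k^{*}$. Uniqueness of the maximiser of $d_{1}^{k}$ over $Q$ is indeed a hypothesis the paper leaves implicit. You can also settle whether $k^{*}$ exists. An admissible $c^{*}=-a_{1}/b_{2}>0$ with $b_{2}<0$ forces $a_{1}>0$, so the numerator $a_{1}d_{2}z_{k}-(c^{*}a_{1}b_{2}-a_{2}b_{1})$ of $d_{1}^{k}$ is positive for all large $k$, while $d_{1}^{k}\sim a_{1}/z_{k}\to0^{+}$. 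Hence $Q\ne\emptyset$ automatically, the maximum is attained, and the $Q=\emptyset$ alternative is vacuous.

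The step that fails as written is Hopf transversality, and you inherited it from \eqref{eq:transhopf}. The entries $a_{1},a_{2},b_{1},b_{2}$ are evaluated at $\Es$, and $\Es$ moves with $c$ through $h(\xs)=c\xi F(\xs)$. So $T_{0}=a_{1}+cb_{2}$ cannot be differentiated with frozen coefficients: $b_{2}/2$ is not $\dd\,\mathrm{Re}\,\lambda_{1}/\dd c$, and $b_{2}<0$ does not by itself give transversality. Using the equilibrium relation, $T_{0}(c)=\Theta(\xs(c))$ with $\Theta=F'p-h$. Implicit differentiation of $h-c\xi F=0$ gives $\dd\xs/\dd c=\xi F/(h'-c\xi F')=\xi pF^{2}/\det J^{*}>0$, hence
\[
\frac{\dd\,\mathrm{Re}\,\lambda_{1}}{\dd c}\Big|_{c=c^{*}}=\frac{\xi\,p(\xs)F(\xs)^{2}}{2\det J^{*}}\,\Theta'(\xs).
\]
So transversality is equivalent to $\xs$ being a simple root of \eqref{eq:hopfscalar}, which you must add as a hypothesis or verify. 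For \eqref{eq:params}, $\Theta'(\xs)\approx-1.497$ and $\dd\xs/\dd c\approx0.197$, giving about $-0.147$ rather than $-0.203$. The paper's own equilibrium at $c=0.11$, where $T_{0}\approx-0.0028$, confirms this, so the conclusion survives for these parameters. The Turing side is unaffected because $\Es$ does not depend on $d_{1}$. Your crossing argument also still holds, since $d_{1}^{k^{*}}(c)$ remains a graph over $c$.

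One minor slip: near the simple zero root, $\lambda\approx D_{k^{*}}/T_{k^{*}}$, so $\partial\lambda/\partial d_{1}=(\partial D_{k^{*}}/\partial d_{1})/T_{k^{*}}<0$, with no minus sign in front. This matches the paper's value $-6.27$. You only use that the derivative is nonzero, so nothing breaks.
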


\begin{proof}
If $Q=\emptyset$, no admissible wavenumber satisfies the Turing instability
conditions with a positive diffusivity, so no Turing curve
$\mathcal{T}_{d_{1}}$ exists in the $(d_{1},c)$-plane; the Hopf curve
$\mathcal{H}_{c}$ has no intersection with a Turing curve and no Turing--Hopf
bifurcation can occur. If $Q\ne\emptyset$ there is $k^{*}\in\N$ with
$d_{1}^{k^{*}}>0$, and $\mathcal{T}_{d_{1}}$ meets $\mathcal{H}_{c}$ at
$(d_{1}^{k^{*}},c^{*})$; this is the generic Turing--Hopf configuration
described in Refs.~\cite{song2016turing,jiang2020formulation}. Moreover
$T_{k}<0$ for $k\ge1$ and $D_{k}>0$ for all $k\in\N\setminus\{k^{*}\}$, so that
apart from the purely imaginary pair belonging to the Hopf mode $k=0$ and the
simple zero eigenvalue belonging to the critical Turing mode $k=k^{*}$, every
eigenvalue of \eqref{eq:char} has negative real part. Transversality follows
from \eqref{eq:transhopf} and \eqref{eq:transturing}.
\end{proof}

\subsection{Normal form at the Turing--Hopf point}\label{sec:thnf}

Set $d_{1}=d_{1}^{k^{*}}+\mu_{1}$ and $c=c^{*}+\mu_{2}$, so that \eqref{eq:pde}
undergoes the Turing--Hopf bifurcation at $\mu_{1}=\mu_{2}=0$. Translating $\Es$
to the origin and writing $X(t)=(x(t),y(t))$, the system
becomes~\cite{jiang2020formulation}
\begin{equation}\label{eq:Xmu}
\frac{\dd X(t)}{\dd t}=D(\mu)\Delta X(t)+L(\mu)X(t)+G\bigl(X(t),\mu\bigr),
\end{equation}
where
\begin{equation}
D(\mu)=\begin{pmatrix}d_{1}^{k^{*}}+\mu_{1}&0\\0&d_{2}\end{pmatrix}
=D_{0}+D_{1}(\mu_{1}),\qquad
L(\mu)=\begin{pmatrix}a_{1}&a_{2}\\ b_{1}&(c^{*}+\mu_{2})b_{2}\end{pmatrix}
=L_{0}+L_{1}(\mu_{2}),
\end{equation}
and $G$ collects the nonlinear terms, with second- and third-order Fr\'echet
derivatives
\begin{equation}
P(\phi,\phi)=\begin{pmatrix}a_{20}\phi_{1}^{2}+a_{11}\phi_{1}\phi_{2}\\
b_{20}\phi_{1}^{2}+b_{02}\phi_{2}^{2}+b_{11}\phi_{1}\phi_{2}\end{pmatrix},
\qquad
Q(\phi,\phi,\phi)=\begin{pmatrix}a_{30}\phi_{1}^{3}+a_{21}\phi_{1}^{2}\phi_{2}\\
b_{30}\phi_{1}^{3}+b_{21}\phi_{1}^{2}\phi_{2}\end{pmatrix},
\end{equation}
$\phi=(\phi_{1},\phi_{2})^{T}$. The characteristic matrix of the linearised
system is
\begin{equation}
\Delta_{k}(\lambda)=\begin{pmatrix}
\lambda+d_{1}z_{k}-a_{1}&-a_{2}\\ -b_{1}&\lambda+d_{2}z_{k}-c^{*}b_{2}
\end{pmatrix}.
\end{equation}
At the Turing--Hopf point $\lambda=0$ is a simple eigenvalue of
$\Delta_{k^{*}}(0)$ and $\lambda=\pm\ii\omega_{0}$ are simple eigenvalues of
$\Delta_{0}(\lambda)$, all remaining eigenvalues having negative real parts. The
right eigenvectors are
\begin{equation}
\phi_{1}=\begin{pmatrix}1\\[2pt]\dfrac{d_{1}^{k^{*}}z_{k^{*}}-a_{1}}{a_{2}}\end{pmatrix},
\qquad
\phi_{2}=\begin{pmatrix}1\\[2pt]\dfrac{\ii\omega_{0}-a_{1}}{a_{2}}\end{pmatrix},
\end{equation}
and the adjoint eigenvectors, normalized by $\ip{\psi_{1}}{\phi_{1}}=1$,
$\ip{\psi_{2}}{\phi_{2}}=1$, $\ip{\psi_{2}}{\bar\phi_{2}}=0$, are
\begin{equation}
\psi_{1}=\frac{1}{1+\dfrac{(a_{1}-d_{1}^{k^{*}}z_{k^{*}})^{2}}{a_{2}b_{1}}}
\begin{pmatrix}1\\[2pt]\dfrac{d_{1}^{k^{*}}z_{k^{*}}-a_{1}}{b_{1}}\end{pmatrix},
\qquad
\psi_{2}=\frac{1}{1+\dfrac{(a_{1}-\ii\omega_{0})^{2}}{a_{2}b_{1}}}
\begin{pmatrix}1\\[2pt]-\dfrac{a_{1}+\ii\omega_{0}}{b_{1}}\end{pmatrix}.
\end{equation}
With $\Phi=(\phi_{1},\phi_{2},\bar\phi_{2})$ and
$\Psi=(\psi_{1},\psi_{2},\bar\psi_{2})^{T}$ the bi-orthogonality relations
$\ip{\psi_{i}}{\phi_{j}}=\delta_{ij}$ hold.

Following the Turing--Hopf normal form theory of Jiang, An and
Shi~\cite{jiang2020formulation}, the dynamics on the three-dimensional center
manifold are governed by
\begin{equation}\label{eq:thnf}
\left\{
\begin{aligned}
\dot z_{1}&=a_{1}(\mu)z_{1}+a_{200}z_{1}^{2}+a_{011}z_{2}\bar z_{2}
+a_{300}z_{1}^{3}+a_{111}z_{1}z_{2}\bar z_{2}+\text{h.o.t.},\\
\dot z_{2}&=\ii\omega_{0}z_{2}+b_{2}(\mu)z_{2}+b_{110}z_{1}z_{2}
+b_{210}z_{1}^{2}z_{2}+b_{021}z_{2}^{2}\bar z_{2}+\text{h.o.t.},\\
\dot{\bar z}_{2}&=-\ii\omega_{0}\bar z_{2}+\overline{b_{2}(\mu)}\bar z_{2}
+\overline{b_{110}}z_{1}\bar z_{2}+\overline{b_{210}}z_{1}^{2}\bar z_{2}
+\overline{b_{021}}z_{2}\bar z_{2}^{2}+\text{h.o.t.}
\end{aligned}\right.
\end{equation}
For $k_{H}=0$, $k^{*}\ne0$ and Neumann conditions on $\Om=(0,l\pi)$,
Ref.~\cite{jiang2020formulation} gives
\begin{align}
a_{1}(\mu)&=\tfrac12\psi_{1}\bigl(L_{1}(\mu)\phi_{1}-z_{k^{*}}D_{1}(\mu)\phi_{1}\bigr),
\qquad
b_{2}(\mu)=\tfrac12\psi_{2}L_{1}(\mu)\phi_{2},\notag\\
a_{200}&=a_{011}=b_{110}=0,\notag\\
a_{300}&=\tfrac14\ip{\psi_{1}}{Q(\phi_{1},\phi_{1},\phi_{1})}
+\frac{1}{\omega_{0}}\mathrm{Re}\bigl(\ii\ip{\psi_{2}}{P(\phi_{1},\phi_{2})}\bigr)
\ip{\psi_{1}}{P(\phi_{1},\phi_{1})}\notag\\
&\quad+\Bigl\langle\psi_{1},P\bigl(\phi_{1},h_{200}^{0}+\tfrac{1}{\sqrt2}h_{200}^{2k^{*}}\bigr)\Bigr\rangle,\notag\\
a_{111}&=\ip{\psi_{1}}{Q(\phi_{1},\phi_{2},\bar\phi_{2})}
+\frac{2}{\omega_{0}}\mathrm{Re}\bigl(\ii\ip{\psi_{2}}{P(\phi_{1},\phi_{2})}\bigr)
\ip{\psi_{1}}{P(\phi_{2},\bar\phi_{2})}\notag\\
&\quad+\Bigl\langle\psi_{1},P(\phi_{1},h_{011}^{0})
+\tfrac{1}{\sqrt2}P(\phi_{1},h_{011}^{2k^{*}})
+P(\phi_{2},h_{101}^{k^{*}})+P(\bar\phi_{2},h_{110}^{k^{*}})\Bigr\rangle,\notag\\
b_{210}&=\tfrac12\psi_{2}Q(\phi_{1},\phi_{1},\phi_{2})
+\psi_{2}\bigl[P(\phi_{1},h_{110}^{k^{*}})+P(\phi_{2},h_{200}^{0})\bigr]\notag\\
&\quad+\frac{1}{2\ii\omega_{0}}\psi_{2}
\Bigl[2\bigl(\psi_{1}P(\phi_{1},\phi_{1})\bigr)P(\phi_{1},\phi_{2})
+\bigl(-P(\phi_{2},\phi_{2})\psi_{2}+P(\phi_{2},\bar\phi_{2})\bar\psi_{2}\bigr)
P(\phi_{1},\phi_{1})\Bigr],\notag\\
b_{021}&=\tfrac12\psi_{2}Q(\phi_{2},\phi_{2},\bar\phi_{2})
+\psi_{2}\bigl[P(\phi_{2},h_{011}^{0})+P(\bar\phi_{2},h_{020}^{0})\bigr]\notag\\
&\quad+\frac{1}{4\ii\omega_{0}}\psi_{2}
\Bigl[\tfrac23\bigl(\bar\psi_{2}P(\bar\phi_{2},\bar\phi_{2})\bigr)P(\phi_{2},\phi_{2})
+\bigl(-2\psi_{2}P(\phi_{2},\phi_{2})+4\bar\psi_{2}P(\phi_{2},\bar\phi_{2})\bigr)
P(\phi_{2},\bar\phi_{2})\Bigr],
\end{align}
with auxiliary functions
\begin{align}
h_{200}^{0}&=-\tfrac12 L_{0}^{-1}P(\phi_{1},\phi_{1})
+\frac{1}{2\ii\omega_{0}}(\phi_{2}\psi_{2}-\bar\phi_{2}\bar\psi_{2})P(\phi_{1},\phi_{1}),\notag\\
h_{200}^{2k^{*}}&=-\frac{1}{2\sqrt2}\bigl(L_{0}-4D_{0}z_{k^{*}}\bigr)^{-1}P(\phi_{1},\phi_{1}),\notag\\
h_{011}^{0}&=-L_{0}^{-1}P(\phi_{2},\bar\phi_{2})
+\frac{1}{\ii\omega_{0}}(\phi_{2}\psi_{2}-\bar\phi_{2}\bar\psi_{2})P(\phi_{2},\bar\phi_{2}),\notag\\
h_{020}^{0}&=\tfrac12(2\ii\omega_{0}I-L_{0})^{-1}P(\phi_{2},\phi_{2})
-\frac{1}{2\ii\omega_{0}}\Bigl(\phi_{2}\psi_{2}+\tfrac13\bar\phi_{2}\bar\psi_{2}\Bigr)P(\phi_{2},\phi_{2}),\notag\\
h_{110}^{k^{*}}&=\bigl(\ii\omega_{0}I-L_{0}+D_{0}z_{k^{*}}\bigr)^{-1}P(\phi_{1},\phi_{2})
-\frac{1}{\ii\omega_{0}}\phi_{1}\psi_{1}P(\phi_{1},\phi_{2}),\notag\\
h_{101}^{k^{*}}&=\overline{h_{110}^{k^{*}}},\qquad
h_{002}^{0}=\overline{h_{020}^{0}},\qquad h_{011}^{2k^{*}}=0 .
\end{align}
The qualitative dynamics near the Turing--Hopf point are governed by
\eqref{eq:thnf}, and in particular:

\begin{theorem}[Ref.~\cite{li2025turing}]\label{thm:hopfpitch}
If $a_{300}\ne0$, $a_{111}\ne0$, $\mathrm{Re}(b_{210})\ne0$,
$\mathrm{Re}(b_{021})\ne0$ and
$a_{300}\mathrm{Re}(b_{021})-a_{111}\mathrm{Re}(b_{210})\ne0$, then, provided
$(R_{2})$ holds and $Q\ne\emptyset$, a Hopf--pitchfork bifurcation occurs at
$\Es$ for $(c,d_{1})=(c^{*},d_{1}^{k^{*}})$.
\end{theorem}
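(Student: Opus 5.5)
The plan has two parts. First, verify that the point $(c,d_{1})=(c^{*},d_{1}^{k^{*}})$ is a genuine codimension-two Turing--Hopf point. Then reduce the three-dimensional normal form \eqref{eq:thnf} to a planar amplitude system and invoke the standard unfolding classification for the Hopf--pitchfork (Guckenheimer--Holmes, Kuznetsov).

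For the first part, Theorem~\ref{thm:th} does most of the work. Under $(R_{2})$ and $Q\neq\emptyset$, the spectrum of the linearisation at $\Es$ consists of a simple zero eigenvalue from the mode $k^{*}$, a simple pair $\pm\ii\omega_{0}$ from the mode $k=0$, and the rest of the spectrum in the open left half-plane. The transversality conditions \eqref{eq:transhopf} and \eqref{eq:transturing} make the map $\mu\mapsto(a_{1}(\mu),\mathrm{Re}\,b_{2}(\mu))$ a local diffeomorphism near $\mu=0$. This uses $\partial_{\mu_{1}}a_{1}\neq0$ and $\partial_{\mu_{2}}\mathrm{Re}\,b_{2}=b_{2}/2\ne0$. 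The remaining cross term does not matter, because the Jacobian is triangular once one notes that $b_{2}(\mu)$ depends only on $\mu_{2}$. So $(\epsilon_{1},\epsilon_{2}):=(a_{1}(\mu),\mathrm{Re}\,b_{2}(\mu))$ can serve as new unfolding parameters.

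For the second part, the center-manifold reduction of Jiang, An and Shi~\cite{jiang2020formulation}, quoted above, gives \eqref{eq:thnf} with $a_{200}=a_{011}=b_{110}=0$. These coefficients vanish by the integrals $\int_{0}^{l\pi}\cos^{3}(k^{*}u/l)\,\dd u=0$ and $\int_{0}^{l\pi}\cos(k^{*}u/l)\,\dd u=0$, the same computation as for $g_{2}^{(k^{*})}$. Write $z_{2}=r e^{\ii\theta}$, keep $z_{1}$ real, and remove the phase by averaging, which is legitimate because $\omega_{0}\ne0$. The truncated amplitude system becomes
\[
\dot z_{1}=z_{1}\bigl(\epsilon_{1}+a_{300}z_{1}^{2}+a_{111}r^{2}\bigr),\qquad
\dot r=r\bigl(\epsilon_{2}+\mathrm{Re}(b_{210})z_{1}^{2}+\mathrm{Re}(b_{021})r^{2}\bigr).
\]
Next rescale $z_{1}$, $r$ and $t$ using $|a_{300}|$ and $|\mathrm{Re}(b_{021})|$, which requires both to be nonzero. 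This brings the system to Guckenheimer--Holmes form $\dot z_{1}=z_{1}(\epsilon_{1}+\kappa z_{1}^{2}+b r^{2})$, $\dot r=r(\epsilon_{2}+c z_{1}^{2}+d r^{2})$ with $\kappa=\pm1$ and $d=\pm1$. The rescaled cross coefficients $b$ and $c$ are finite because $a_{111}\ne0$ and $\mathrm{Re}(b_{210})\ne0$. The nondegeneracy condition $d-bc\kappa\ne0$ is exactly $a_{300}\mathrm{Re}(b_{021})-a_{111}\mathrm{Re}(b_{210})\ne0$ up to a nonzero factor, and it ensures the mixed-mode equilibrium is isolated. The classification then says that the equilibria $z_{1}=0$ (Hopf branch) and $r=0$ (pitchfork branch) coexist with a possible mixed mode, and that each of these is hyperbolic in the planar system away from the critical lines in the $(\epsilon_{1},\epsilon_{2})$ plane. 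This is precisely the Hopf--pitchfork (fold--Hopf with $\mathbb{Z}_{2}$ symmetry $z_{1}\mapsto-z_{1}$) scenario.

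I expect the main obstacle to be justifying that the conclusions of the truncated, averaged planar system persist for the full system \eqref{eq:Xmu}. One issue is that the $\mathbb{Z}_{2}$ symmetry $z_{1}\mapsto-z_{1}$ must hold for the full reduced vector field, not just for the truncation. I would get it from the reflection symmetry of the Neumann problem on $(0,l\pi)$: the map $u\mapsto l\pi-u$ sends $\cos(k^{*}u/l)$ to $(-1)^{k^{*}}\cos(k^{*}u/l)$. For odd $k^{*}$ this gives the symmetry directly. For even $k^{*}$ I would instead use that the translation symmetry on the doubled periodic domain gives the $O(2)$ action, whose restriction yields the $z_{1}\mapsto -z_{1}$ equivariance; this is the approach taken in Ref.~\cite{li2025turing}, which I would cite for this step. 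The second issue is that hyperbolic equilibria of the averaged amplitude system must correspond to periodic orbits or invariant tori of the full system. Equilibria with $r=0$ and hyperbolic equilibria with $r>0$ persist under higher-order terms by the implicit function theorem; in the latter case they become periodic orbits. For the mixed-mode branch, in the unfoldings where it undergoes a secondary Hopf bifurcation, the resulting torus is not claimed to persist, only its equilibria and periodic-orbit skeleton. That skeleton is all the theorem asserts.
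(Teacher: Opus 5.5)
The paper gives no proof here. The theorem is imported from Ref.~\cite{li2025turing}, and the surrounding text only supplies the Jiang--An--Shi coefficient formulas and their numerical values. Your proposal reconstructs the argument the citation stands for, and the outline is sound:
the codimension-two spectrum from Theorem~\ref{thm:th};
new unfolding parameters $(\epsilon_{1},\epsilon_{2})$ via transversality;
the planar amplitude system obtained from \eqref{eq:thnf} in cylindrical coordinates;
rescaling to the $\mathbb{Z}_{2}$-symmetric standard form, where indeed $d-bc\kappa=\bigl(a_{300}\mathrm{Re}(b_{021})-a_{111}\mathrm{Re}(b_{210})\bigr)/\bigl(a_{300}|\mathrm{Re}(b_{021})|\bigr)$.
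You also handle two points the paper never raises: equivariance of the untruncated reduced field under $z_{1}\mapsto-z_{1}$, and persistence of hyperbolic amplitude equilibria as equilibria and periodic orbits. The citation is shorter; your route shows what each hypothesis is for and exposes the ones used tacitly.

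There is one slip. Your rescaled $b$ and $c$ are finite because $a_{300}$ and $\mathrm{Re}(b_{021})$ are nonzero. What $a_{111}\neq0$ and $\mathrm{Re}(b_{210})\neq0$ give is $b\neq0$ and $c\neq0$. The classification needs this: otherwise the lines $\epsilon_{2}=c\kappa\epsilon_{1}$ and $\epsilon_{1}=bd\,\epsilon_{2}$, where the mixed mode branches off the pure modes, collapse onto the primary lines $\epsilon_{2}=0$ and $\epsilon_{1}=0$.

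Two inputs you take from the paper need repair.

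First, the step $\partial_{\mu_{2}}\mathrm{Re}\,b_{2}(\mu)\propto b_{2}=-\xi\ys<0$, from \eqref{eq:transhopf}, differentiates only the explicit $c$ in $J^{*}$ and holds $\Es$ fixed. But $\xs$ moves with $c$ through $h(\xs)=c\xi F(\xs)$. Differentiating $\operatorname{tr}J^{*}=F'(\xs)p(\xs)-c\xi F(\xs)$ along the equilibrium branch gives
\[
\frac{\dd}{\dd c}\operatorname{tr}J^{*}=\xi F(\xs)\,\frac{(F'p)'(\xs)-h'(\xs)}{h'(\xs)-c\xi F'(\xs)},
\]
whose denominator equals $\det J^{*}/\bigl(p(\xs)F(\xs)\bigr)>0$ under $(R_{2})$. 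So Hopf transversality is equivalent to $\xs$ being a simple root of \eqref{eq:hopfscalar}. This is a generic condition, but $(R_{2})$ and $Q\neq\emptyset$ do not imply it, so it must be added as a hypothesis. At the paper's parameters the derivative is about $-0.29$ rather than $b_{2}\approx-0.41$, so your conclusion survives. Your triangularity argument is unaffected, because $d_{1}$ does not move $\Es$.

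Second, the simple zero eigenvalue you take from Theorem~\ref{thm:th} requires $d_{1}^{k^{*}}$ to be a strict maximum over $Q$. A tie makes $D_{k}$ vanish at two wavenumbers, and the center manifold becomes four-dimensional.

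Finally, for even $k^{*}$ you do not need the doubled domain. The closed span of $\cos(jk^{*}u/l)$, $j\ge0$, is invariant under \eqref{eq:pde} (by even reflection and uniqueness) and contains the center subspace. On it the problem is the Neumann problem on $(0,l\pi/k^{*})$, with the critical mode as its first mode, so $u\mapsto l\pi/k^{*}-u$ acts as $z_{1}\mapsto-z_{1}$ for every $k^{*}$. This is the same symmetry your $O(2)$ argument produces (translation by $l\pi/k^{*}$), obtained without a four-dimensional center manifold.
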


\subsection{Numerical study of the Turing--Hopf bifurcation}\label{sec:thnum}

We fix
\begin{equation}
\gamma=11,\quad \alpha=1,\quad \delta=1,\quad \beta=3.9,\quad \xi=0.3,\quad d_{2}=1,
\end{equation}
and take $\Om=(0,l\pi)$ with $l=2$, so that $z_{k}=k^{2}/l^{2}=k^{2}/4$. The
parameters $c$ and $d_{1}$ are the bifurcation parameters.

The coexistence equilibrium and its Jacobian are
\begin{equation}
\Es=(0.0640448876,\,1.3561030602),\qquad
J(\Es)=\begin{pmatrix}
0.0408565543& -0.0469521848\\
2.8424967620&-0.0408565543
\end{pmatrix},
\end{equation}
the trace vanishing identically at $c=c^{*}$, as it must. The Hopf condition
$a_{1}+cb_{2}=0$ gives
\begin{equation}
c^{*}=0.1004263750,
\end{equation}
so that the homogeneous mode $k=0$ carries a pair of purely imaginary
eigenvalues $\lambda_{2,3}=\pm\ii\omega_{0}$, $\omega_{0}=0.3630319203$.

Evaluating \eqref{eq:turingcurve} at $c=c^{*}$ mode by mode gives
\begin{equation}\label{eq:d1k}
\begin{aligned}
d_{1}^{4}&=0.0019571372, & d_{1}^{5}&=0.0031426255, & d_{1}^{6}&=0.0028993911,\\
d_{1}^{7}&=0.0024488130, & d_{1}^{8}&=0.0020335288, &
d_{1}^{k}&<0\ \ (k\le3),
\end{aligned}
\end{equation}
so that $Q=\{k\ge4\}$ and the maximum is attained at
\begin{equation}
k^{*}=5,\qquad z_{k^{*}}=\Bigl(\frac{5}{2}\Bigr)^{2}=6.25,
\end{equation}
with
\begin{equation}\label{eq:d1TH}
d_{1}^{\mathrm{TH}}=0.0031426255 .
\end{equation}
The Turing--Hopf point in the $(c,d_{1})$-plane is therefore
\begin{equation}
(c^{*},d_{1}^{\mathrm{TH}})=(0.1004263750,\,0.0031426255),
\end{equation}
at which the homogeneous mode $k=0$ has a pair of purely imaginary eigenvalues
while the mode $k=5$ has a simple zero eigenvalue: the Hopf and Turing
instabilities occur simultaneously, confirming the existence of a Turing--Hopf
bifurcation at $\Es$. Transversality is immediate from \eqref{eq:transhopf},
$\dd\,\mathrm{Re}\,\lambda_{1}/\dd c=b_{2}/2=-0.2034155$, and
\eqref{eq:transturing},
$\partial D_{5}/\partial d_{1}=z_{5}(z_{5}-c^{*}b_{2})=39.3178537$.

At the bifurcation point the critical eigenvectors are
\begin{equation}
\phi_{1}=\begin{pmatrix}1\\ 0.4518457\end{pmatrix},\qquad
\phi_{2}=\begin{pmatrix}1\\ 0.8701737-7.7319495\,\ii\end{pmatrix}.
\end{equation}
Two of the normal-form coefficients can be pinned down independently of the full
center-manifold computation, and both are informative.

First, setting $z_{2}=0$ in \eqref{eq:thnf} leaves the pure Turing branch
$\dot z_{1}=a_{1}(\mu)z_{1}+a_{300}z_{1}^{3}$, whose cubic coefficient is the
Turing normal-form coefficient of Sec.~\ref{sec:turingnf} evaluated at $k^{*}=5$
and $d_{1}=d_{1}^{\mathrm{TH}}$. Direct computation gives
$g_{3}^{(5)}=-0.9183721<0$ in the normalization $\phi_{1,1}=1$,
$\ip{\psi_{1}}{\phi_{1}}=1$; the sign of this quantity is invariant under
rescaling of the critical eigenvector, so
\begin{equation}
a_{300}<0,
\end{equation}
and the pure Turing branch emerging from the Turing--Hopf point is
supercritical.

Second, setting $z_{1}=0$ leaves the pure Hopf branch
$\dot z_{2}=\ii\omega_{0}z_{2}+b_{021}z_{2}^{2}\bar z_{2}$, which is the
spatially homogeneous Hopf normal form of Sec.~\ref{sec:hopf-pde}. Hence
$b_{021}=c_{1}(c_{0})$ in the normalization used there, so that
$\mathrm{Re}(b_{021})<0$.

Carrying out the full three-dimensional center-manifold reduction at
$(c^{*},d_{1}^{\mathrm{TH}})$, with the critical eigenfunctions
$\phi_{1}\cos(k^{*}u/l)$, $\phi_{2}$ and $\bar\phi_{2}$, the quadratic
center-manifold coefficients computed mode by mode, and the quadratic terms of
the reduced field removed by a near-identity transformation, gives
\begin{equation}\label{eq:nfcoef}
\begin{aligned}
a_{300}&=-0.9183721221, &\qquad
a_{111}&=\phantom{-}0.1097666894,\\
b_{210}&=\phantom{-}0.9608730216-0.7470167113\,\ii, &\qquad
b_{021}&=-0.0726049860-15.7509488338\,\ii,
\end{aligned}
\end{equation}
and therefore
\begin{equation}
a_{300}\,\mathrm{Re}(b_{021})-a_{111}\,\mathrm{Re}(b_{210})=-0.0387934555\ne0 .
\end{equation}
Two internal consistency checks support \eqref{eq:nfcoef}. The coefficient
$b_{021}$ obtained from the three-dimensional reduction agrees to fourteen
digits with $c_{1}(c_{0})$ computed independently in
Sec.~\ref{sec:hopfpdenum} from Hassard's formula; and $a_{300}$ agrees to ten
digits with the pure Turing coefficient $g_{3}^{(5)}$ of
Sec.~\ref{sec:turingnf}, even though the two computations treat the spatially
uniform component of the quadratic center-manifold term differently: in the pure
Turing reduction it is retained in $w_{20}$, whereas at the Turing--Hopf point it
is projected onto the Hopf center subspace and re-enters through the normal-form
transformation.

The corresponding linear coefficients follow from the eigenvalue derivatives at
the codimension-two point,
\begin{equation}
\frac{\partial\lambda_{T}}{\partial d_{1}}=-6.2711525,\quad
\frac{\partial\lambda_{T}}{\partial c}=0.0013766,\quad
\frac{\partial\,\mathrm{Re}\,\lambda_{H}}{\partial c}=\frac{b_{2}}{2}=-0.2034155,\quad
\frac{\partial\,\mathrm{Im}\,\lambda_{H}}{\partial c}=-0.0228942,
\end{equation}
which give, in the scaling of Ref.~\cite{jiang2020formulation},
\begin{equation}
a_{1}(\mu)=-3.1355763\,\mu_{1}+0.0006883\,\mu_{2},\qquad
b_{2}(\mu)=\bigl(-0.1017078-0.0114471\,\ii\bigr)\mu_{2}.
\end{equation}

Since $a_{300}<0$, $a_{111}\ne0$, $\mathrm{Re}(b_{210})\ne0$,
$\mathrm{Re}(b_{021})<0$ and
$a_{300}\mathrm{Re}(b_{021})\ne a_{111}\mathrm{Re}(b_{210})$,
Theorem~\ref{thm:hopfpitch} applies: a Hopf--pitchfork bifurcation occurs at
$\Es$ for $(c,d_{1})=(0.1004263750,\,0.0031426255)$, and both primary branches
issuing from the codimension-two point, the spatially uniform periodic orbit and
the pair of stationary $\cos(5u/2)$ patterns, are supercritical.

Direct integration of \eqref{eq:pde} on $\Om=(0,2\pi)$ with $400$ grid points
reproduces the predicted scenario. At $c=c^{*}$ and
$d_{1}=0.0033>d_{1}^{\mathrm{TH}}$ a small perturbation of $\Es$ neither decays
nor grows appreciably, consistent with the marginal Hopf mode. At $c=c^{*}$ and
$d_{1}=0.0030<d_{1}^{\mathrm{TH}}$ the mode $\cos(5u/2)$ grows to a finite
amplitude and the solution remains time dependent, i.e.\ the system settles on a
spatially heterogeneous, temporally oscillating state, which is the signature of
the Turing--Hopf interaction and a state that neither instability produces alone.

Figures~\ref{fig:7}--\ref{fig:9} illustrate the dynamics on the Turing side of
the codimension-two point. At $c=0.15$ the coexistence equilibrium is
$\Es=(0.0739220,\,1.3646890)$ and since $c=0.15>c^{*}$ the homogeneous mode is
stable there, while the Turing threshold at that value of $c$ is
$d_{1}^{k=5}=0.0036298>0.0025$, so the solutions shown are spatially
heterogeneous and, as our integrations confirm, time dependent rather than
stationary.

\begin{figure}[ht!]
\centering
\begin{minipage}{0.48\linewidth}\figph{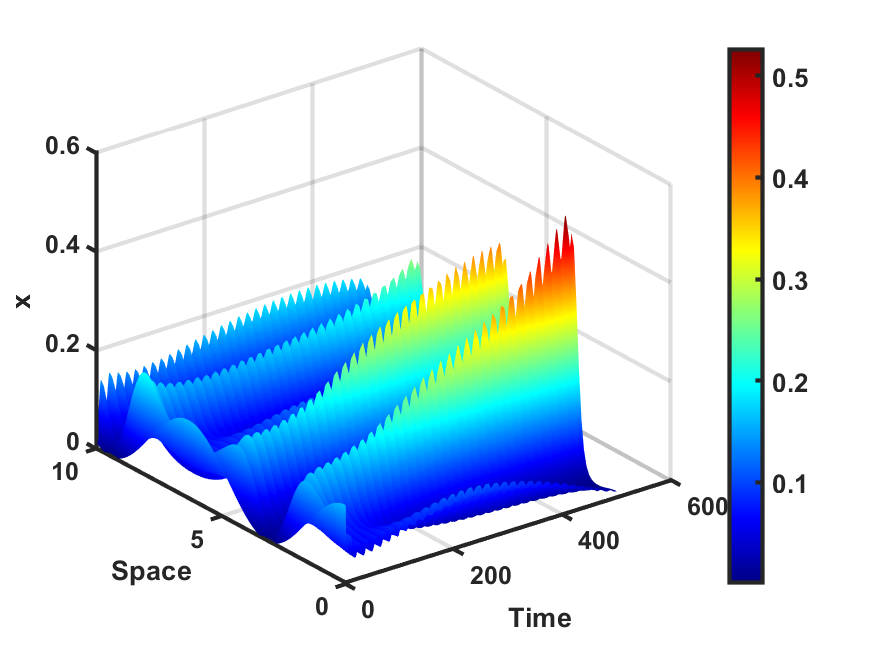}{42mm}\end{minipage}\hfill
\begin{minipage}{0.48\linewidth}\figph{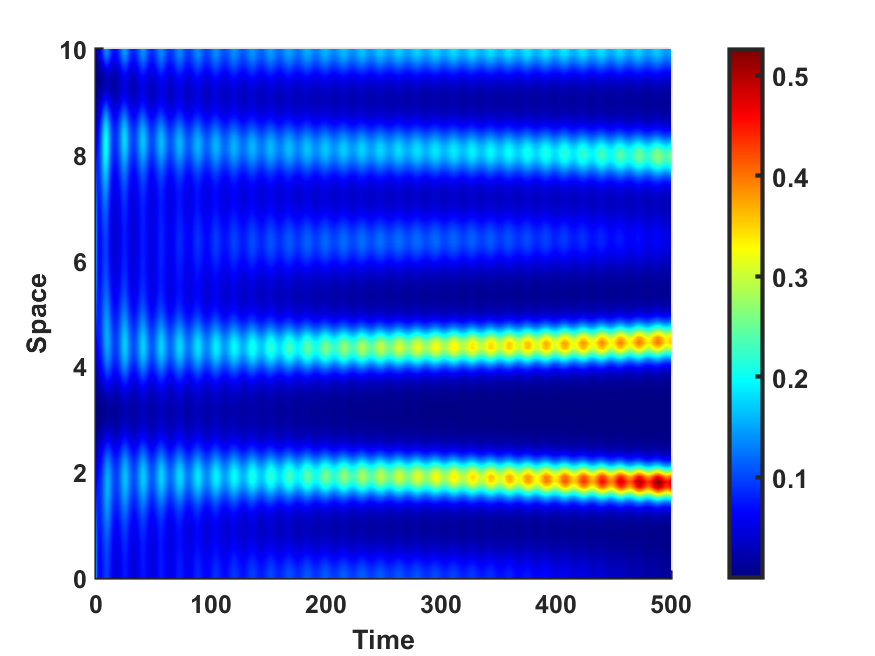}{42mm}\end{minipage}\\[6pt]
\begin{minipage}{0.48\linewidth}\figph{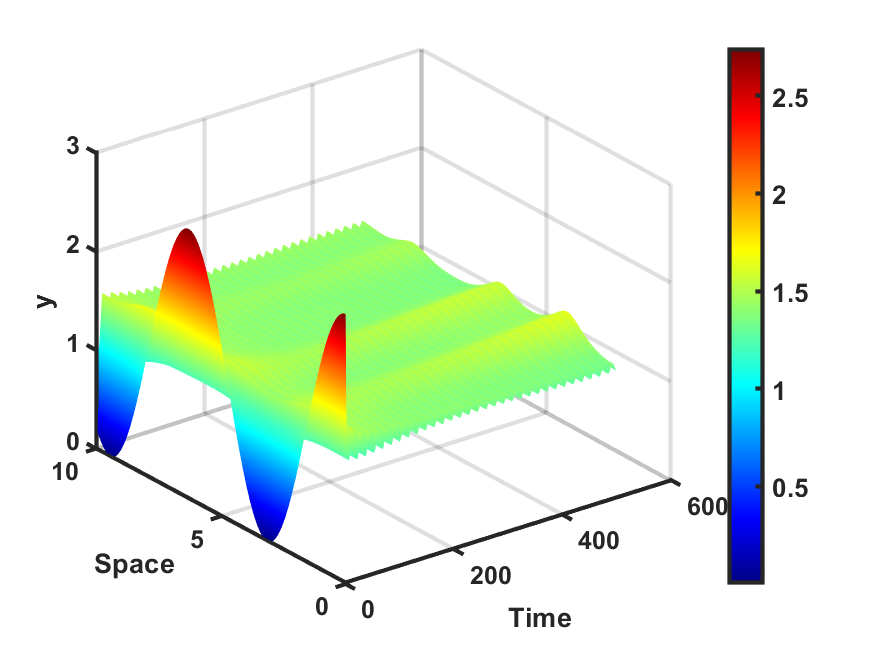}{42mm}\end{minipage}\hfill
\begin{minipage}{0.48\linewidth}\figph{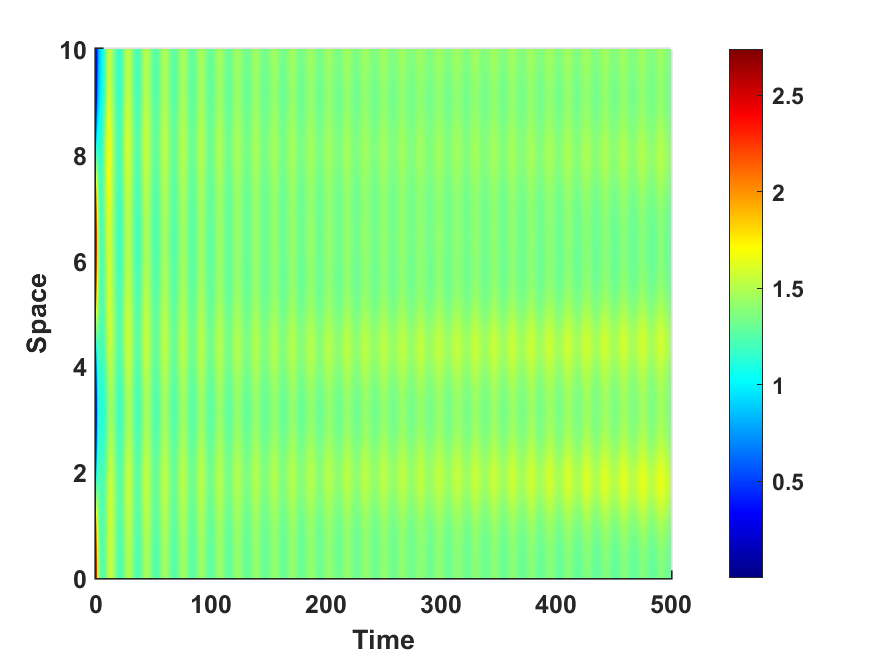}{42mm}\end{minipage}
\caption{\label{fig:7}%
$\gamma=11$, $\alpha=1$, $\delta=1$, $\beta=3.9$, $\xi=0.3$,
$d_{1}=0.0025$, $d_{2}=1$, $c=0.15$, $\Om=(0,2\pi)$. Initial data
$\bigl(x_{0}(u),y_{0}(u)\bigr)=\bigl(0.0800(1+\cos u),\,1.3700(1+\cos u)\bigr)$.
The constant state $\Es=(0.0739220,1.3646890)$ is unstable and
\eqref{eq:pde} develops a spatially heterogeneous time-periodic state.}
\alttext{Four panels: three-dimensional surfaces of prey and of predator density over space and time, each with a companion space-time heat map in which time runs horizontally and position vertically. Both fields develop a repeating spatial pattern whose amplitude keeps oscillating in time.}
\end{figure}

\begin{figure}[ht!]
\centering
\begin{minipage}{0.48\linewidth}\figph{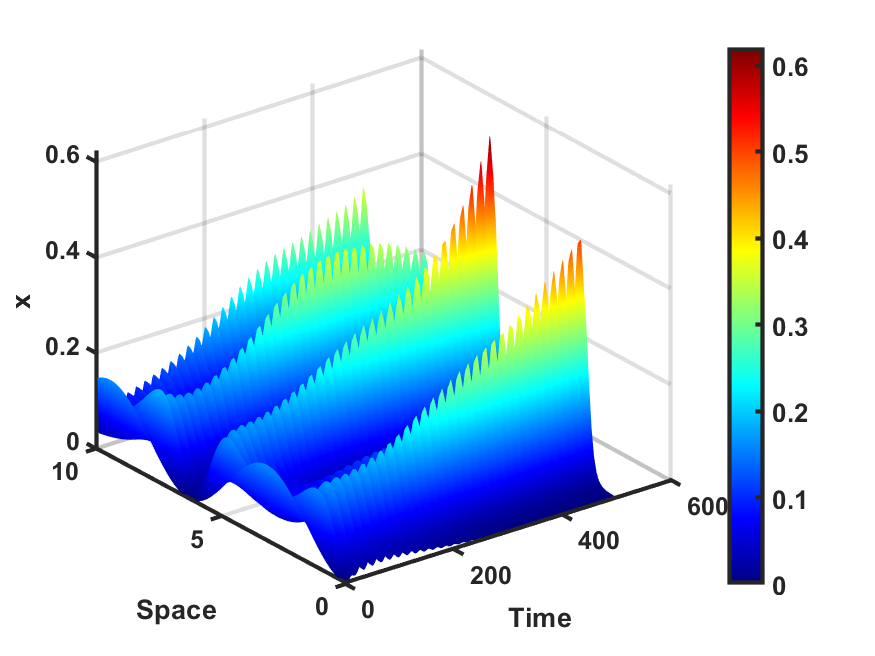}{42mm}\end{minipage}\hfill
\begin{minipage}{0.48\linewidth}\figph{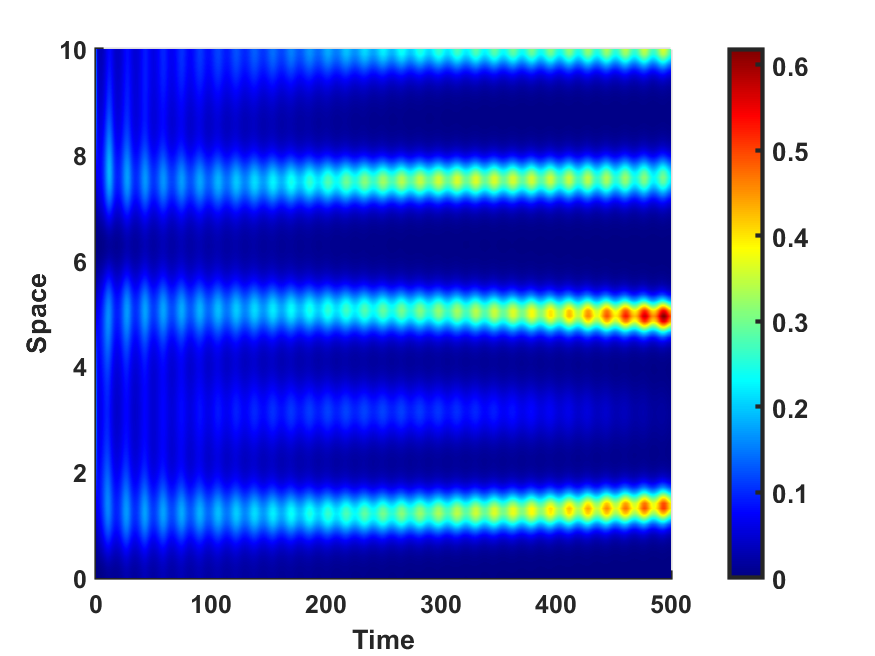}{42mm}\end{minipage}\\[6pt]
\begin{minipage}{0.48\linewidth}\figph{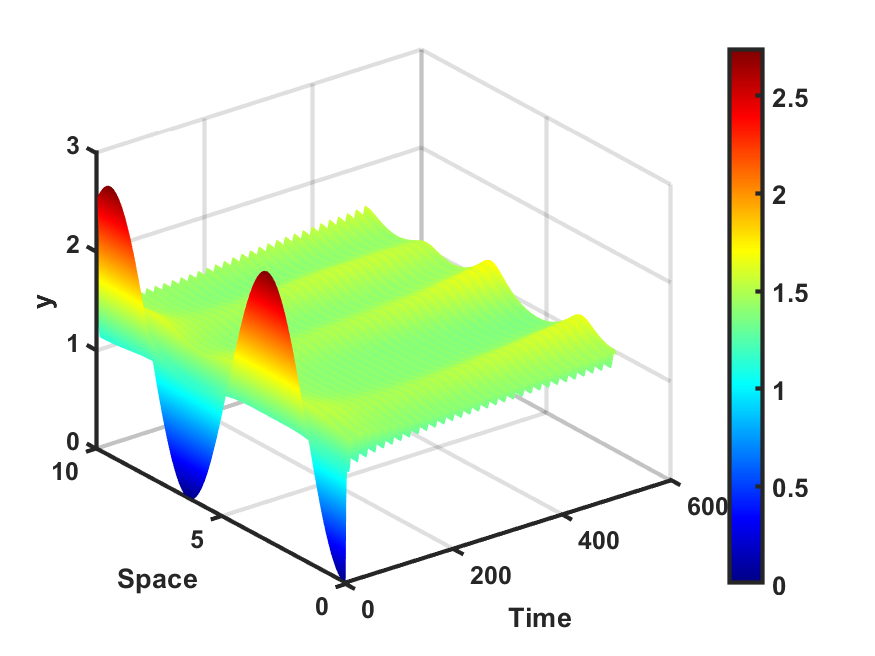}{42mm}\end{minipage}\hfill
\begin{minipage}{0.48\linewidth}\figph{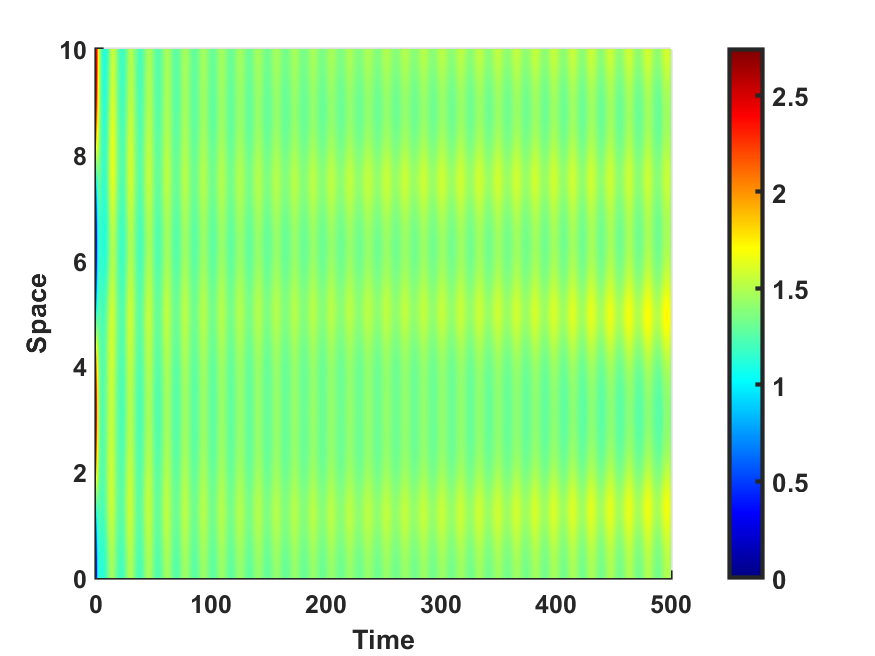}{42mm}\end{minipage}
\caption{\label{fig:8}%
As in Fig.~\ref{fig:7} with initial data
$\bigl(0.0800(1-\cos u),\,1.3700(1-\cos u)\bigr)$.}
\alttext{Four panels of the same type as the previous figure, started from the complementary spatial perturbation. The banded heat maps show that the repeating spatial pattern and its continuing temporal oscillation are reached from either perturbation.}
\end{figure}

\begin{figure}[ht!]
\centering
\begin{minipage}{0.48\linewidth}\figph{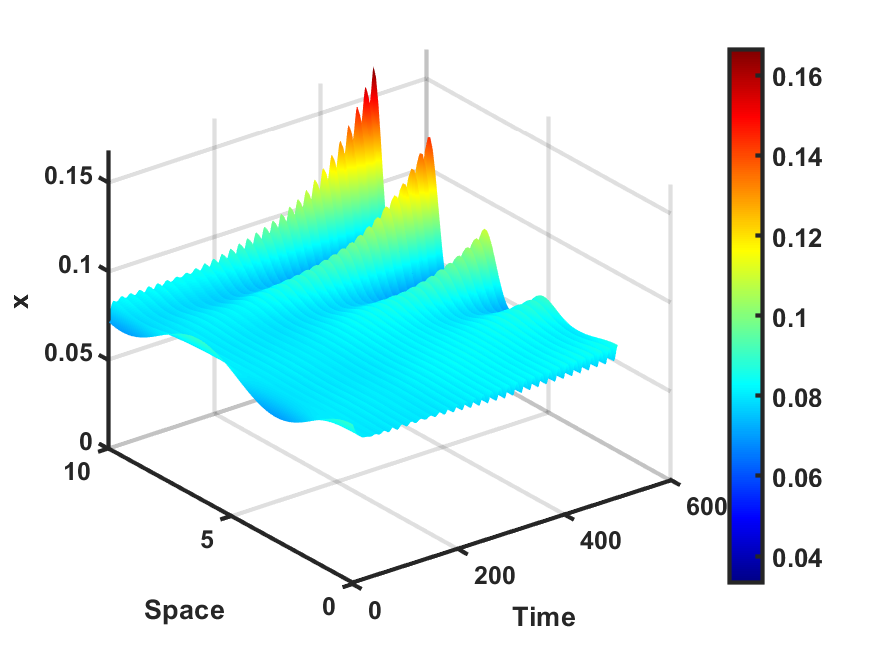}{42mm}\end{minipage}\hfill
\begin{minipage}{0.48\linewidth}\figph{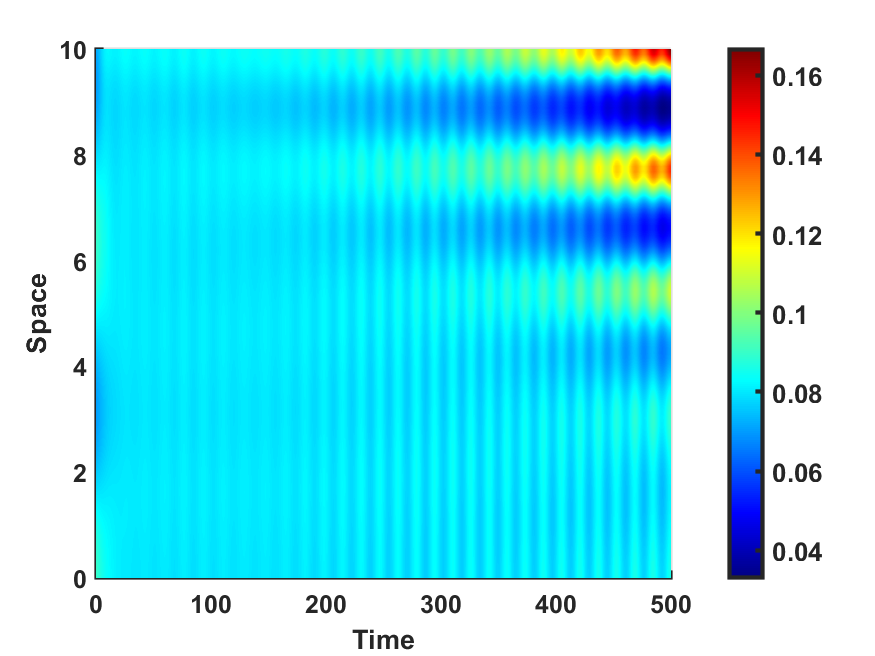}{42mm}\end{minipage}\\[6pt]
\begin{minipage}{0.48\linewidth}\figph{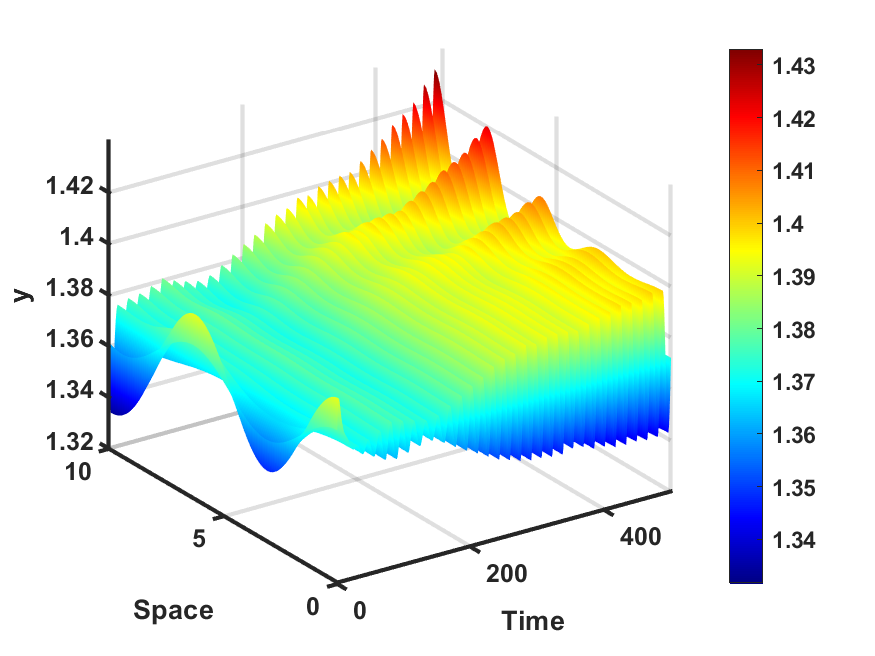}{42mm}\end{minipage}\hfill
\begin{minipage}{0.48\linewidth}\figph{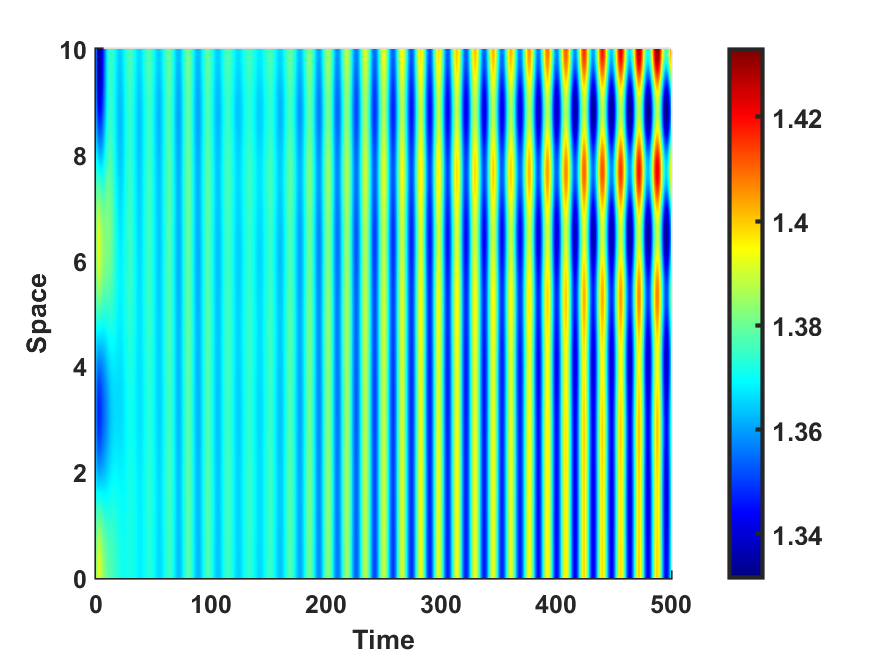}{42mm}\end{minipage}
\caption{\label{fig:9}%
As in Fig.~\ref{fig:7} with initial data
$\bigl(0.0800+0.01\cos u,\,1.3700+0.01\cos u\bigr)$.}
\alttext{Four panels of the same type as the previous two figures, started from a small perturbation of the constant state. The pattern grows out of the small disturbance and then oscillates, showing the state is reached without large initial structure.}
\end{figure}

%% =====================================================================
\section{Discussion}\label{sec:discussion}
%% =====================================================================

The self-limitation term entered the additional-food literature because the model
does not otherwise make sense: without it, solutions run to infinity in finite
time~\cite{parshad2023additional}. What the spatial analysis adds is that the
same term decides which kind of community one actually sees. Three regimes meet
at a single point of the $(d_{1},c)$-plane. When competition is weak, $c<c^{*}$,
the coexistence state is unstable to the uniform mode and the whole habitat rises
and falls together; because the Hopf bifurcation is supercritical
(Sec.~\ref{sec:hopfpdenum}), the cycle grows continuously out of the steady
state, with amplitude scaling as $(c^{*}-c)^{1/2}$, rather than appearing
full-sized. When competition is strong, $c>c^{*}$, and the two species move at
comparable speeds, the community is well mixed and sits still. Between these two,
if the prey disperses much more slowly than the predator, the uniform state gives
way to a finite-wavelength mode and the community freezes into a fixed mosaic of
crowded and empty patches. That pitchfork came out supercritical in every case we
computed, so the mosaic also appears gradually and at small amplitude instead of
by a jump. Close to $(d_{1}^{\mathrm{TH}},c^{*})$ the two mechanisms are equally
strong and the system can do both at once: our integration at $c=c^{*}$,
$d_{1}=0.0030$ settles on a $\cos(5u/2)$ pattern that keeps oscillating in time,
which is a state neither mechanism produces by itself.

The pattern-forming mechanism is the usual activator--inhibitor one, with the
supplement altering both halves of it. The prey is the activator, reinforcing
itself through logistic growth when it is rare; the predator is the inhibitor,
and it operates over a much wider area, since
$d_{1}^{\mathrm{TH}}/d_{2}\approx3\times10^{-3}$ at the threshold. Additional
food pushes in two directions at the same time. Through the term
$\beta\xi/(1+\alpha\xi+x)$ it raises the predator's baseline growth rate and so
strengthens inhibition; by sitting in the denominator of the functional response
it uses up handling time and makes the predator worse at following local
concentrations of prey. The result shows up in where the coexistence state lies:
$\xs=0.064$, under one per cent of the carrying capacity. The predator is being
kept alive almost entirely by the supplement, and the prey survives at a very low
but stubbornly patchy density. This is worth a word of caution for augmentative
biological control. Judged by mean density, a well-mixed model would call the
pest suppressed to a negligible level; the spatial model says the same parameter
values can leave a standing set of refuge patches in which prey density is an
order of magnitude above that mean. Those patches do not wash out with time,
because the instability itself maintains them, not any underlying heterogeneity
of the habitat. Comparable conclusions have been drawn for other supplemented
systems~\cite{ghorai2016pattern}.

The role of the competition coefficient is simple to state and, we think, open to
test. Raising $c$ steadies the population in time: $\mathrm{Re}\,\lambda_{H}$
falls at a rate $b_{2}/2=-0.203$ per unit of $c$, and above $c^{*}$ the uniform
oscillation is gone. It does very little to the spatial side, where
$\partial\lambda_{T}/\partial c=0.0014$ against
$\partial\lambda_{T}/\partial d_{1}=-6.27$, so along the line $c=c^{*}$ the
system stays close to the Turing boundary whatever $c$ does. Releasing natural
enemies that interfere strongly with one another therefore buys stability in time
at almost no cost in space, which is not at all the same as releasing more of
them. The lopsidedness of those two sensitivities also says which variable a
practitioner is really working with: it is the contrast in mobility between pest
and enemy, which is settled by the choice of enemy, and not the amount of food
put out.

Several limits of the analysis should be stated plainly. The habitat is a line
with no flux at the ends, so the available wavenumbers are discrete and $k^{*}$
depends on the length of the domain through $l$. In two dimensions the same
linear calculation picks out a whole circle in Fourier space, and the choice
between stripes, spots and hexagons is then made by quadratic terms that vanish
identically in the one-dimensional Neumann problem; carrying the normal form over
to a planar domain~\cite{jiang2020formulation,song2016turing} is the obvious next
step. The model is deterministic, and it treats the supplement as evenly spread
and inexhaustible. Dropping either assumption is more than a perturbation: a
supplement that can be depleted brings back a third dynamical variable and
reopens the blow-up question that made the self-limitation term necessary in the
first place. Finally, everything here is local. The normal form \eqref{eq:thnf}
governs a neighbourhood of the codimension-two point, and it says nothing about
what happens far from it, such as the mode-$2$-dominated state we found at
$c=1.1$, $d_{1}=0.0087$ with three modes unstable at once. Which mode wins beyond
threshold is a nonlinear question, and often a global one.

Several of the quantities that decide the outcome, including the sign of the
Lyapunov number, the sign of $\mathrm{Re}(b_{021})$ and the position of
$d_{1}^{\mathrm{TH}}$, are differences between terms of similar size, and are
therefore easy to get wrong. Each has been computed at least twice by routes that
share no intermediate steps, and every qualitative prediction has been checked
against direct integration of \eqref{eq:pde}. Those routes agree to between ten
and fourteen significant digits, and that agreement is the basis for the values
reported here.

%% =====================================================================
\section{Conclusion}\label{sec:conclusion}
%% =====================================================================

We have studied a diffusive predator--prey model in which the predator is given
additional food of quantity $\xi$ and quality $1/\alpha$, takes prey through a
Holling type-II response modified by that food, and suffers intraspecific
competition of strength $c\xi$. For the kinetics, the Hopf condition collapses to
a single scalar equation that does not involve $c$, which pins the bifurcation
down exactly, and three separate computations show it to be supercritical. For
the spatial problem we gave necessary and sufficient conditions for
diffusion-driven instability, derived the normal form of the pitchfork that
follows and settled the convention fixing its criticality, and proved that the
spatially uniform Hopf bifurcation is supercritical as well. Taking $(d_{1},c)$
as the two free parameters, the Turing and Hopf curves cross transversally at
$(d_{1},c)=(0.0031426255,\,0.1004263750)$, with critical wavenumber $k^{*}=5$ on
$\Om=(0,2\pi)$; we computed the three-dimensional normal form on the center
manifold there and checked the non-degeneracy conditions for a Hopf--pitchfork
bifurcation. Integrating the partial differential equations directly reproduces
the predicted sequence of states, including the patterned, time-periodic
solutions that live only near the codimension-two point.

The message is that predator self-limitation, put into the additional-food
framework to stop finite-time blow-up, is what separates a community that
oscillates as a whole from one that breaks up in space. The two thresholds
respond very differently to the parameters, sharply to the ratio of diffusivities
and barely at all to the strength of competition. A supplemented system can
therefore be made quiet in time while sitting right at the edge of pattern
formation, so that a pest judged suppressed from mean densities may still be
holding persistent high-density refuges.

The natural extensions are planar domains, where quadratic resonances decide
between stripes and spots; a supplement that is depletable or unevenly
distributed; maturation and gestation delays, which are known to interact
strongly with the Turing--Hopf
structure~\cite{an2019spatiotemporal,jiang2019turing,dai2021turing,li2023bifurcation};
non-local or taxis-driven movement of the
predator~\cite{lv2024turing,luo2021global}; and stochastic forcing, which near a
codimension-two point can throw up large transient patterns well before the
deterministic threshold is reached.

%% =====================================================================
\section*{Declaration of competing interest}
The authors declare that they have no known competing financial interests or personal relationships that could have appeared to influence the work reported in this paper.

\printcredits

\section*{Funding}
This research did not receive any specific grant from funding agencies in the
public, commercial, or not-for-profit sectors.

\section*{Data availability}
The data that support the findings of this study are available within the
article.

%\section*{Acknowledgements}
%The authors appreciate the insightful feedback provided by the Associate Editor
%and the anonymous reviewers.

\bibliographystyle{cas-model2-names}
\bibliography{new_ref}

@book{lotka1925elements,
  author    = {Lotka, Alfred J.},
  title     = {Elements of Physical Biology},
  publisher = {Williams and Wilkins},
  address   = {Baltimore},
  year      = {1925}
}

@article{volterra1926variazioni,
  author  = {Volterra, Vito},
  title   = {Variazioni e fluttuazioni del numero d'individui in specie animali conviventi},
  journal = {Memorie della R. Accademia Nazionale dei Lincei},
  volume  = {2},
  pages   = {31--113},
  year    = {1926}
}

@article{holling1959components,
  author  = {Holling, C. S.},
  title   = {Some characteristics of simple types of predation and parasitism},
  journal = {The Canadian Entomologist},
  volume  = {91},
  number  = {7},
  pages   = {385--398},
  year    = {1959},
  doi     = {10.4039/Ent91385-7}
}

@article{holling1965functional,
  author  = {Holling, C. S.},
  title   = {The functional response of predators to prey density and its role in mimicry and population regulation},
  journal = {Memoirs of the Entomological Society of Canada},
  volume  = {97},
  number  = {S45},
  pages   = {5--60},
  year    = {1965},
  doi     = {10.4039/entm9745fv}
}

@book{freedman1980deterministic,
  author    = {Freedman, H. I.},
  title     = {Deterministic Mathematical Models in Population Ecology},
  publisher = {Marcel Dekker},
  address   = {New York},
  year      = {1980}
}

@book{may2001stability,
  author    = {May, Robert M.},
  title     = {Stability and Complexity in Model Ecosystems},
  publisher = {Princeton University Press},
  address   = {Princeton, NJ},
  year      = {2001}
}

@book{murray2003mathematical,
  author    = {Murray, James D.},
  title     = {Mathematical Biology II: Spatial Models and Biomedical Applications},
  edition   = {3},
  publisher = {Springer},
  address   = {New York},
  year      = {2003}
}

@book{okubo2001diffusion,
  author    = {Okubo, Akira and Levin, Simon A.},
  title     = {Diffusion and Ecological Problems: Modern Perspectives},
  edition   = {2},
  publisher = {Springer},
  address   = {New York},
  year      = {2001}
}

@article{beddington1975mutual,
  author  = {Beddington, J. R.},
  title   = {Mutual interference between parasites or predators and its effect on searching efficiency},
  journal = {Journal of Animal Ecology},
  volume  = {44},
  number  = {1},
  pages   = {331--340},
  year    = {1975},
  doi     = {10.2307/3866}
}

@article{deangelis1975model,
  author  = {DeAngelis, D. L. and Goldstein, R. A. and O'Neill, R. V.},
  title   = {A model for trophic interaction},
  journal = {Ecology},
  volume  = {56},
  number  = {4},
  pages   = {881--892},
  year    = {1975},
  doi     = {10.2307/1936298}
}

@article{srinivasu2007biological,
  author  = {Srinivasu, P. D. N. and Prasad, B. S. R. V. and Venkatesulu, M.},
  title   = {Biological control through provision of additional food to predators: A theoretical study},
  journal = {Theoretical Population Biology},
  volume  = {72},
  number  = {1},
  pages   = {111--120},
  year    = {2007},
  doi     = {10.1016/j.tpb.2007.03.011}
}

@article{srinivasu2010time,
  author  = {Srinivasu, P. D. N. and Prasad, B. S. R. V.},
  title   = {Time optimal control of an additional food provided predator--prey system with applications to pest management and biological conservation},
  journal = {Journal of Mathematical Biology},
  volume  = {60},
  number  = {4},
  pages   = {591--613},
  year    = {2010},
  doi     = {10.1007/s00285-009-0279-2}
}

@article{srinivasu2011role,
  author  = {Srinivasu, P. D. N. and Prasad, B. S. R. V.},
  title   = {Role of quantity of additional food to predators as a control in predator--prey systems with relevance to pest management and biological conservation},
  journal = {Bulletin of Mathematical Biology},
  volume  = {73},
  number  = {10},
  pages   = {2249--2276},
  year    = {2011},
  doi     = {10.1007/s11538-010-9601-9}
}

@article{prasad2013dynamics,
  author  = {Prasad, B. S. R. V. and Banerjee, Malay and Srinivasu, P. D. N.},
  title   = {Dynamics of additional food provided predator--prey system with mutually interfering predators},
  journal = {Mathematical Biosciences},
  volume  = {246},
  number  = {1},
  pages   = {176--190},
  year    = {2013},
  doi     = {10.1016/j.mbs.2013.08.013}
}

@article{sen2015global,
  author  = {Sen, Moitri and Srinivasu, P. D. N. and Banerjee, Malay},
  title   = {Global dynamics of an additional food provided predator--prey system with constant harvest in predators},
  journal = {Applied Mathematics and Computation},
  volume  = {250},
  pages   = {193--211},
  year    = {2015},
  doi     = {10.1016/j.amc.2014.10.085}
}

@article{gurubilli2017global,
  author  = {Gurubilli, Kaushik Kumar and Srinivasu, P. D. N. and Banerjee, Malay},
  title   = {Global dynamics of a prey-predator model with {A}llee effect and additional food for the predators},
  journal = {International Journal of Dynamics and Control},
  volume  = {5},
  number  = {3},
  pages   = {903--916},
  year    = {2017},
  doi     = {10.1007/s40435-016-0234-1}
}

@article{srinivasu2018additional,
  author  = {Srinivasu, P. D. N. and Vamsi, D. K. K. and Ananth, V. S.},
  title   = {Additional food supplements as a tool for biological conservation of predator-prey systems involving type {III} functional response: A qualitative and quantitative investigation},
  journal = {Journal of Theoretical Biology},
  volume  = {455},
  pages   = {303--318},
  year    = {2018},
  doi     = {10.1016/j.jtbi.2018.07.019}
}

@article{parshad2023additional,
  author  = {Parshad, Rana D. and Wickramasooriya, Sureni and Antwi-Fordjour, Kwadwo and Banerjee, Aniket},
  title   = {Additional food causes predators to explode --- unless the predators compete},
  journal = {International Journal of Bifurcation and Chaos},
  volume  = {33},
  number  = {3},
  pages   = {2350034},
  year    = {2023},
  doi     = {10.1142/S0218127423500347}
}

@article{ghorai2016pattern,
  author  = {Ghorai, Santu and Poria, Swarup},
  title   = {Pattern formation and control of spatiotemporal chaos in a reaction diffusion prey--predator system supplying additional food},
  journal = {Chaos, Solitons \& Fractals},
  volume  = {85},
  pages   = {57--67},
  year    = {2016},
  doi     = {10.1016/j.chaos.2016.01.013}
}

@article{turing1952chemical,
  author  = {Turing, Alan M.},
  title   = {The chemical basis of morphogenesis},
  journal = {Philosophical Transactions of the Royal Society of London. Series B, Biological Sciences},
  volume  = {237},
  number  = {641},
  pages   = {37--72},
  year    = {1952},
  doi     = {10.1098/rstb.1952.0012}
}

@article{segel1972dissipative,
  author  = {Segel, Lee A. and Jackson, Julius L.},
  title   = {Dissipative structure: An explanation and an ecological example},
  journal = {Journal of Theoretical Biology},
  volume  = {37},
  number  = {3},
  pages   = {545--559},
  year    = {1972},
  doi     = {10.1016/0022-5193(72)90090-2}
}

@article{klausmeier1999regular,
  author  = {Klausmeier, Christopher A.},
  title   = {Regular and irregular patterns in semiarid vegetation},
  journal = {Science},
  volume  = {284},
  number  = {5421},
  pages   = {1826--1828},
  year    = {1999},
  doi     = {10.1126/science.284.5421.1826}
}

@article{rietkerk2008regular,
  author  = {Rietkerk, Max and van de Koppel, Johan},
  title   = {Regular pattern formation in real ecosystems},
  journal = {Trends in Ecology \& Evolution},
  volume  = {23},
  number  = {3},
  pages   = {169--175},
  year    = {2008},
  doi     = {10.1016/j.tree.2007.10.013}
}

@article{pringle2017spatial,
  author  = {Pringle, Robert M. and Tarnita, Corina E.},
  title   = {Spatial self-organization of ecosystems: Integrating multiple mechanisms of regular-pattern formation},
  journal = {Annual Review of Entomology},
  volume  = {62},
  number  = {1},
  pages   = {359--377},
  year    = {2017},
  doi     = {10.1146/annurev-ento-031616-035413}
}

@article{medvinsky2002spatiotemporal,
  author  = {Medvinsky, Alexander B. and Petrovskii, Sergei V. and Tikhonova, Irene A. and Malchow, Horst and Li, Bai-Lian},
  title   = {Spatiotemporal complexity of plankton and fish dynamics},
  journal = {SIAM Review},
  volume  = {44},
  number  = {3},
  pages   = {311--370},
  year    = {2002},
  doi     = {10.1137/S0036144502404442}
}

@article{petrovskii1999minimal,
  author  = {Petrovskii, Sergei V. and Malchow, Horst},
  title   = {A minimal model of pattern formation in a prey-predator system},
  journal = {Mathematical and Computer Modelling},
  volume  = {29},
  number  = {8},
  pages   = {49--63},
  year    = {1999},
  doi     = {10.1016/S0895-7177(99)00070-9}
}

@article{petrovskii2002allee,
  author  = {Petrovskii, Sergei V. and Morozov, Andrew Y. and Venturino, Ezio},
  title   = {Allee effect makes possible patchy invasion in a predator--prey system},
  journal = {Ecology Letters},
  volume  = {5},
  number  = {3},
  pages   = {345--352},
  year    = {2002},
  doi     = {10.1046/j.1461-0248.2002.00324.x}
}

@article{morozov2006spatiotemporal,
  author  = {Morozov, Andrew and Petrovskii, Sergei and Li, Bai-Lian},
  title   = {Spatiotemporal complexity of patchy invasion in a predator-prey system with the {A}llee effect},
  journal = {Journal of Theoretical Biology},
  volume  = {238},
  number  = {1},
  pages   = {18--35},
  year    = {2006},
  doi     = {10.1016/j.jtbi.2005.05.021}
}

@article{venturino2013spatiotemporal,
  author  = {Venturino, Ezio and Petrovskii, Sergei},
  title   = {Spatiotemporal behavior of a prey--predator system with a group defense for prey},
  journal = {Ecological Complexity},
  volume  = {14},
  pages   = {37--47},
  year    = {2013},
  doi     = {10.1016/j.ecocom.2013.01.004}
}

@article{hu2015pattern,
  author  = {Hu, Guangping and Li, Xiaoling and Wang, Yuepeng},
  title   = {Pattern formation and spatiotemporal chaos in a reaction--diffusion predator--prey system},
  journal = {Nonlinear Dynamics},
  volume  = {81},
  number  = {1-2},
  pages   = {265--275},
  year    = {2015},
  doi     = {10.1007/s11071-015-1988-2}
}

@article{sambath2018spatiotemporal,
  author  = {Sambath, M. and Balachandran, K. and Guin, L. N.},
  title   = {Spatiotemporal patterns in a predator--prey model with cross-diffusion effect},
  journal = {International Journal of Bifurcation and Chaos},
  volume  = {28},
  number  = {2},
  pages   = {1830004},
  year    = {2018},
  doi     = {10.1142/S0218127418300045}
}

@book{hassard1981theory,
  author    = {Hassard, B. D. and Kazarinoff, N. D. and Wan, Y.-H.},
  title     = {Theory and Applications of Hopf Bifurcation},
  series    = {London Mathematical Society Lecture Note Series},
  volume    = {41},
  publisher = {Cambridge University Press},
  address   = {Cambridge},
  year      = {1981}
}

@book{perko2013differential,
  author    = {Perko, Lawrence},
  title     = {Differential Equations and Dynamical Systems},
  edition   = {3},
  series    = {Texts in Applied Mathematics},
  volume    = {7},
  publisher = {Springer},
  address   = {New York},
  year      = {2001}
}

@book{kuznetsov2004elements,
  author    = {Kuznetsov, Yuri A.},
  title     = {Elements of Applied Bifurcation Theory},
  edition   = {3},
  series    = {Applied Mathematical Sciences},
  volume    = {112},
  publisher = {Springer},
  address   = {New York},
  year      = {2004}
}

@book{guckenheimer1983nonlinear,
  author    = {Guckenheimer, John and Holmes, Philip},
  title     = {Nonlinear Oscillations, Dynamical Systems, and Bifurcations of Vector Fields},
  series    = {Applied Mathematical Sciences},
  volume    = {42},
  publisher = {Springer},
  address   = {New York},
  year      = {1983}
}

@book{wu1996theory,
  author    = {Wu, Jianhong},
  title     = {Theory and Applications of Partial Functional Differential Equations},
  series    = {Applied Mathematical Sciences},
  volume    = {119},
  publisher = {Springer},
  address   = {New York},
  year      = {1996}
}

@article{yi2009bifurcation,
  author  = {Yi, Fengqi and Wei, Junjie and Shi, Junping},
  title   = {Bifurcation and spatiotemporal patterns in a homogeneous diffusive predator--prey system},
  journal = {Journal of Differential Equations},
  volume  = {246},
  number  = {5},
  pages   = {1944--1977},
  year    = {2009},
  doi     = {10.1016/j.jde.2008.10.024}
}

@article{jiang2020formulation,
  author  = {Jiang, Weihua and An, Qi and Shi, Junping},
  title   = {Formulation of the normal form of {Turing--Hopf} bifurcation in partial functional differential equations},
  journal = {Journal of Differential Equations},
  volume  = {268},
  number  = {10},
  pages   = {6067--6102},
  year    = {2020},
  doi     = {10.1016/j.jde.2019.11.039}
}

@article{song2016turing,
  author  = {Song, Yongli and Zhang, Tonghua and Peng, Yahong},
  title   = {{Turing--Hopf} bifurcation in the reaction--diffusion equations and its applications},
  journal = {Communications in Nonlinear Science and Numerical Simulation},
  volume  = {33},
  pages   = {229--258},
  year    = {2016},
  doi     = {10.1016/j.cnsns.2015.10.002}
}

@article{an2019spatiotemporal,
  author  = {An, Qi and Jiang, Weihua},
  title   = {Spatiotemporal attractors generated by the {Turing--Hopf} bifurcation in a time-delayed reaction-diffusion system},
  journal = {Discrete and Continuous Dynamical Systems - B},
  volume  = {24},
  number  = {2},
  pages   = {487--510},
  year    = {2019},
  doi     = {10.3934/dcdsb.2018183}
}

@article{an2018turing,
  author  = {An, Qi and Jiang, Weihua},
  title   = {{Turing--Hopf} bifurcation and spatio-temporal patterns of a ratio-dependent {Holling--Tanner} model with diffusion},
  journal = {International Journal of Bifurcation and Chaos},
  volume  = {28},
  number  = {9},
  pages   = {1850108},
  year    = {2018},
  doi     = {10.1142/S0218127418501080}
}

@article{cao2018turing,
  author  = {Cao, Xun and Jiang, Weihua},
  title   = {{Turing--Hopf} bifurcation and spatiotemporal patterns in a diffusive predator--prey system with {Crowley--Martin} functional response},
  journal = {Nonlinear Analysis: Real World Applications},
  volume  = {43},
  pages   = {428--450},
  year    = {2018},
  doi     = {10.1016/j.nonrwa.2018.03.010}
}

@article{chen2019turing,
  author  = {Chen, Xiaoying and Jiang, Weihua},
  title   = {{Turing--Hopf} bifurcation and multi-stable spatio-temporal patterns in the {Lengyel--Epstein} system},
  journal = {Nonlinear Analysis: Real World Applications},
  volume  = {49},
  pages   = {386--404},
  year    = {2019},
  doi     = {10.1016/j.nonrwa.2019.03.013}
}

@article{jiang2019turing,
  author  = {Jiang, Weihua and Wang, Hongbin and Cao, Xun},
  title   = {{Turing} instability and {Turing--Hopf} bifurcation in diffusive {Schnakenberg} systems with gene expression time delay},
  journal = {Journal of Dynamics and Differential Equations},
  volume  = {31},
  number  = {4},
  pages   = {2223--2247},
  year    = {2019},
  doi     = {10.1007/s10884-018-9702-y}
}

@article{dai2021turing,
  author  = {Dai, Binxiang and Sun, Guangxun},
  title   = {{Turing--Hopf} bifurcation of a delayed diffusive predator--prey system with chemotaxis and fear effect},
  journal = {Applied Mathematics Letters},
  volume  = {111},
  pages   = {106644},
  year    = {2021},
  doi     = {10.1016/j.aml.2020.106644}
}

@article{jiang2019schooling,
  author  = {Jiang, Heping},
  title   = {{Turing} bifurcation in a diffusive predator-prey model with schooling behavior},
  journal = {Applied Mathematics Letters},
  volume  = {96},
  pages   = {230--235},
  year    = {2019},
  doi     = {10.1016/j.aml.2019.05.010}
}

@article{li2013hopf,
  author  = {Li, Xin and Jiang, Weihua and Shi, Junping},
  title   = {{Hopf} bifurcation and {Turing} instability in the reaction--diffusion {Holling--Tanner} predator--prey model},
  journal = {IMA Journal of Applied Mathematics},
  volume  = {78},
  number  = {2},
  pages   = {287--306},
  year    = {2013},
  doi     = {10.1093/imamat/hxr050}
}

@article{li2023bifurcation,
  author  = {Li, Peiluan and Gao, Rong and Xu, Changjin and Ahmad, Shabir and Li, Ying and Akg\"{u}l, Ali},
  title   = {Bifurcation behavior and $PD^{\gamma}$ control mechanism of a fractional delayed genetic regulatory model},
  journal = {Chaos, Solitons \& Fractals},
  volume  = {168},
  pages   = {113219},
  year    = {2023},
  doi     = {10.1016/j.chaos.2023.113219}
}

@article{fu2025turing,
  author  = {Fu, Xinyu and Jiang, Heping},
  title   = {{Turing--Hopf} bifurcation in a diffusive predator--prey model with schooling behavior and {Smith} growth},
  journal = {Applied Mathematics Letters},
  volume  = {159},
  pages   = {109257},
  year    = {2025},
  doi     = {10.1016/j.aml.2024.109257}
}

@article{li2025turing,
  author  = {Li, Lu and Yan, Xiang-Ping and Zhang, Cun-Hua},
  title   = {{Turing}, {Hopf} and {Turing--Hopf} bifurcations in a modified {Leslie--Gower} predator--prey diffusive system with {Smith} prey growth and nonmonotonic functional response},
  journal = {Chaos, Solitons \& Fractals},
  volume  = {201},
  pages   = {117226},
  year    = {2025},
  doi     = {10.1016/j.chaos.2025.117226}
}

@article{li2023turing,
  author  = {Li, Yanqiu and Zhou, Yibo},
  title   = {{Turing--Hopf} bifurcation in a general {Selkov--Schnakenberg} reaction--diffusion system},
  journal = {Chaos, Solitons \& Fractals},
  volume  = {171},
  pages   = {113473},
  year    = {2023},
  doi     = {10.1016/j.chaos.2023.113473}
}

@article{lv2024turing,
  author  = {Lv, Yehu},
  title   = {{Turing--Hopf} bifurcation analysis and normal form in delayed diffusive predator--prey system with taxis and fear effect},
  journal = {Journal of Applied Mathematics and Computing},
  volume  = {70},
  number  = {6},
  pages   = {5721--5761},
  year    = {2024},
  doi     = {10.1007/s12190-024-02183-4}
}

@article{luo2021global,
  author  = {Luo, Dongpo},
  title   = {Global bifurcation for a reaction--diffusion predator--prey model with {Holling-II} functional response and prey--taxis},
  journal = {Chaos, Solitons \& Fractals},
  volume  = {147},
  pages   = {110975},
  year    = {2021},
  doi     = {10.1016/j.chaos.2021.110975}
}

@article{tao2021study,
  author  = {Tao, Xiangyu and Zhu, Linhe},
  title   = {Study of periodic diffusion and time delay induced spatiotemporal patterns in a predator-prey system},
  journal = {Chaos, Solitons \& Fractals},
  volume  = {150},
  pages   = {111101},
  year    = {2021},
  doi     = {10.1016/j.chaos.2021.111101}
}

@article{bhunia2023study,
  author  = {Bhunia, Bidhan and Ghorai, Santu and Kar, Tapan Kumar and Biswas, Samir and Bhutia, Lakpa Thendup and Debnath, Papiya},
  title   = {A study of a spatiotemporal delayed predator--prey model with prey harvesting: Constant and periodic diffusion},
  journal = {Chaos, Solitons \& Fractals},
  volume  = {175},
  pages   = {113967},
  year    = {2023},
  doi     = {10.1016/j.chaos.2023.113967}
}

@article{li2025spatiotemporal,
  author  = {Li, Yanqiu and Sun, Jia},
  title   = {Spatiotemporal dynamics of a delayed diffusive predator--prey model with hunting cooperation in predator and anti-predator behaviors in prey},
  journal = {Chaos, Solitons \& Fractals},
  volume  = {198},
  pages   = {116561},
  year    = {2025},
  doi     = {10.1016/j.chaos.2025.116561}
}

@article{chen2023turing,
  author  = {Chen, Mengxin and Xu, Yuanyuan and Zhao, Jinbiao and Wei, Xin},
  title   = {{Turing--Hopf} bifurcation analysis in a diffusive ratio-dependent predator--prey model with {Allee} effect and predator harvesting},
  journal = {Entropy},
  volume  = {26},
  number  = {1},
  pages   = {18},
  year    = {2023},
  doi     = {10.3390/e26010018}
}

@article{miao2022hopf,
  author  = {Miao, Liangying and He, Zhiqian},
  title   = {{Hopf} bifurcation and {Turing} instability in a diffusive predator-prey model with hunting cooperation},
  journal = {Open Mathematics},
  volume  = {20},
  number  = {1},
  pages   = {986--997},
  year    = {2022},
  doi     = {10.1515/math-2022-0474}
}

@article{toab218,
    author = {Banerjee, Aniket and Valmorbida, Ivair and O’Neal, Matthew E and Parshad, Rana},
    title = {Exploring the Dynamics of Virulent and Avirulent Aphids: A Case for a ‘Within Plant’ Refuge},
    journal = {Journal of Economic Entomology},
    volume = {115},
    number = {1},
    pages = {279-288},
    year = {2022},
    month = {02},
    issn = {0022-0493},
    doi = {10.1093/jee/toab218}
}

@Article{two_aniket,
title = {Two species competition with a "non-smooth" Allee mechanism: applications to soybean aphid population dynamics under climate change},
journal = {Mathematical Biosciences and Engineering},
volume = {22},
number = {3},
pages = {604-651},
year = {2025},
issn = {1551-0018},
doi = {10.3934/mbe.2025023},
url = {https://www.aimspress.com/article/doi/10.3934/mbe.2025023},
author = {Aniket Banerjee and Urvashi Verma and Margaret T. Lewis and Rana D. Parshad}}

@article{verma2026t,
  title={T (w) o Patch or Not T (w) o Patch: A Novel Biocontrol Model},
  author={Verma, Urvashi and Banerjee, Aniket and Parshad, Rana D},
  journal={Mathematical Methods in the Applied Sciences},
  year={2026},
  publisher={Wiley Online Library}
}

\end{document}